\newtheorem{theorem}{Theorem}
\newtheorem{lemma}{Lemma}
\newtheorem{proposition}{Proposition}
\newenvironment{proof}[1][Proof]{\textbf{#1.} }{\ \rule{0.5em}{0.5em}}
\begin{document}

\title[]{Virtual copies of semisimple Lie algebras in enveloping algebras of semidirect products
and Casimir operators}

\author{R. Campoamor-Stursberg\dag}
\address{\dag\ Dpto. Geometr\'{\i}a y Topolog\'{\i}a\\Fac. CC. Matem\'aticas\\
Universidad Complutense de Madrid\\Plaza de Ciencias, 3\\E-28040
Madrid, Spain} \ead{rutwig@mat.ucm.es}

\author{S. G. Low\ddag}
\address{\ddag Austin, TX, USA}
\ead{Stephen.Low@alumni.utexas.net}

\begin{abstract}
Given a semidirect product $\frak{g}=\frak{s}\uplus\frak{r}$ of
semisimple Lie algebras $\frak{s}$ and solvable algebras
$\frak{r}$, we construct polynomial operators in the enveloping
algebra $\mathcal{U}(\frak{g})$ of $\frak{g}$ that commute with
$\frak{r}$ and transform like the generators of $\frak{s}$, up to
a functional factor that turns out to be a Casimir operator of
$\frak{r}$. Such operators are said to generate a virtual copy of
$\frak{s}$ in $\mathcal{U}(\frak{g})$, and allow to compute the
Casimir operators of $\frak{g}$ in closed form, using the
classical formulae for the invariants of $\frak{s}$. The behavior
of virtual copies with respect to contractions of Lie algebras is
analyzed. Applications to the class of Hamilton algebras and their
inhomogeneous extensions are given.

\end{abstract}

\pacs{02.20Sv, 02.20Qs}

\maketitle


\newpage

\section{Introduction}

Casimir operators of Lie algebras and their generalizations
constitute an important tool in dealing with various branches of
theoretical physics. Typically their appear in connection with
quantum numbers depicting the main properties of a system. Since
the eigenvalues of such operators characterize irreducible
representations and constitute measurable physical quantities,
they provide natural labels for classification schemes and
characterization of states. Further, connections between
symmetries and dynamical properties of a system lead to the
possibility of expressing Hamiltonians and deducing mass formulae
from the Casimir operators of the involved Lie algebras. This fact
has been extensively used in the algebraic models in atomic and
nuclear physics \cite{IA}. Applications of invariants of Lie
algebras to phenomenological aspects of a theory have also been
shown to be useful, as given for example by the cubic
$SU_{c}(3)$-operator in the confinement problem of the
non-relativistic quark model or the construction of the relevant
operators in $N=2$ supersymmetric models by means of Casimir
operators \cite{Wit}.

\medskip

Classical Casimir operators are operators from the universal
enveloping algebra $\mathcal{U}(\frak{g})$ that commute with all
elements of the Lie algebra, and therefore lead to the concept of
symmetry. For semisimple Lie algebras $\frak{s}$, the problem of
finding the centre of $\mathcal{U}(\frak{s})$ has been solved long
ago \cite{Ra,Per,Mac}, using the important structural properties
that are essential for their classification \cite{Co}. Although
polynomial invariants of an arbitrary Lie algebra always belong to
the centre of $\mathcal{U}(\frak{g})$, for non-semisimple algebras
this direct approach is quite complicated in practice, due to the
absence of structural properties like the Killing form and the
difficulties of their representation theory. For these types of
algebras, an analytic approach has shown to be more effective
\cite{Be,AA,Sh,C23,C33,Bo,C73}.

\medskip

However, the construction of Casimir operators by means of
enveloping algebras can be adapted well to semidirect products
$\frak{ws}=\frak{s}\overrightarrow
{\frak{\oplus}}_{R}\frak{w}\left( n\right)$ of simple Lie algebras
$\frak{s}$ with the Heisenberg-Weyl algebra $\frak{w}\left(
n\right)$. The main idea is to find quadratic operators in the
enveloping algebra that transform like the generators of the Levi
subalgebra $\frak{s}$ and commute with the generators of
$\frak{w}\left( n\right)$. In this manner, the Casimir operators
can be easily recovered using the classical formulae for the
invariants of $\frak{s}$. This compatibility is partially
explained by the simplicity of the boson realizations for these
algebras, and generally fails for radicals other than the
Heisenberg algebras.

\medskip

In this work, we propose a natural generalization of this method
to determine the Casimir operators of wide classes of semidirect
products of semisimple and solvable Lie algebras. The main
difference with the $\frak{ws}$ case is that the operators in the
enveloping algebra are of order higher than quadratic. These
operators commute with the generator of the radical $\frak{r}$,
and transform like the generators of the Levi subalgebra
$\frak{s}$, up to a functional factor given by a Casimir operator
of $\frak{g}$ depending only on the generators of the radical. It
follows that the normalized operators do not live properly in the
enveloping algebra of $\frak{g}$, but in its fraction field
\cite{AA}. For this reason such copies will be called virtual.
This modification does however not affect the computation of the
Casimir operators using the formulae for the invariants of
$\frak{s}$.

\medskip

As an application of this method, we obtain the Casimir operators
for the inhomogeneous Hamilton algebras $IHa(N)$ and all its
central extensions in arbitrary dimension. These algebras have
been used in the group theoretical analysis of noninertial states
in Quantum Mechanics \cite{SL3}. Some unanswered questions
formulated in \cite{SL1} concerning the number and form of their
invariants are solved simultaneously for all values of $N$.

\medskip

Finally, given a contraction
$\frak{g}\rightsquigarrow\frak{g}^{\prime}$, we analyze under
which conditions the operators spanning a virtual copy of the Levi
subalgebra $\frak{s}\subset\frak{g}$ in $\mathcal{U}(\frak{g})$
can be contracted to operators generating a copy of the same
semisimple algebra in the enveloping algebra of the contraction
$\frak{g}^{\prime}$.

\section{Invariants of Lie algebras. Maurer-Cartan equations}

Given a Lie algebra $ \frak{g}=\left\{X_{1},..,X_{n}\; |\;
\left[X_{i},X_{j}\right]=C_{ij}^{k}X_{k}\right\}$ in terms of
generators and commutation relations, we are primarily interested
in (polynomial) operators
$C_{p}=\alpha^{i_{1}..i_{p}}X_{i_{1}}..X_{i_{p}}$ in the
generators of $\frak{s}$ such that the constraint $
\left[X_{i},C_{p}\right]=0$,\; ($i=1,..,n$) is satisfied. Such an
operator can be shown to lie in the centre of the enveloping
algebra of $\frak{g}$, and is traditionally referred to as Casimir
operator. For semisimple Lie algebras, the determination of
Casimir operators can be done using structural properties
\cite{Per}. However, for non-semisimple Lie algebras the relevant
invariant functions are often rational or even transcendental
functions \cite{Bo}. This suggest to develop a method in order to
cover arbitrary Lie algebras. One convenient approach is the
analytical realization. The generators of the Lie algebra
$\frak{s}$ are realized in the space $C^{\infty }\left(
\frak{g}^{\ast }\right) $ by means of the differential operators:
\begin{equation}
\widehat{X}_{i}=C_{ij}^{k}x_{k}\frac{\partial }{\partial x_{j}},
\label{Rep1}
\end{equation}
where $\left\{ x_{1},..,x_{n}\right\}$ is a dual basis of
$\left\{X_{1},..,X_{n}\right\} $. The invariants of $\frak{g}$ (in
particular, the Casimir operators) are solutions of the following
system of partial differential equations:
\begin{equation}
\widehat{X}_{i}F=0,\quad 1\leq i\leq n.  \label{sys}
\end{equation}
Whenever we have a polynomial solution of (\ref{sys}), the
symmetrization map defined by
\begin{equation}
Sym(x_{i_{1}}^{a_{1}}..x_{i_{p}}^{a_{p}})=\frac{1}{p!}\sum_{\sigma\in
S_{p}}x_{\sigma(i_{1})}^{a_{1}}..x_{\sigma(i_{p})}^{a_{p}}
\end{equation}
allows to recover the Casimir operators in their usual form, i.e,
as elements in the centre of the enveloping algebra of $\frak{g}$,
after replacing the variables $x_{i}$ by the corresponding
generator $X_{i}$ \cite{AA}. A maximal set of functionally
independent invariants is usually called a fundamental basis. The
number $\mathcal{N}(\frak{g})$ of functionally independent
solutions of (\ref{sys}) is obtained from the classical criteria
for differential equations, and is given by:
\begin{equation}
\mathcal{N}(\frak{g}):=\dim \,\frak{g}- {\rm
sup}_{x_{1},..,x_{n}}{\rm rank}\left( C_{ij}^{k}x_{k}\right),
\label{BB}
\end{equation}
where $A(\frak{g}):=\left(C_{ij}^{k}x_{k}\right)$ is the matrix
associated to the commutator table of $\frak{g}$ over the given
basis.

\medskip

The use of differential forms has turned out to be useful not only
to reformulate formula (\ref{BB}), but also to compute Casimir
operators in some situations \cite{C72}. In terms of the
Maurer-Cartan equations, the Lie algebra $\frak{g}$
is described as follows: Given the structure tensor $\left\{  C_{ij}%
^{k}\right\}  $ over the basis $\left\{  X_{1},..,X_{n}\right\} $,
the identification of the dual space $\frak{g}^{\ast}$ with the
left-invariant 1-forms on the simply connected Lie group whose
algebra is isomorphic to $\frak{g}$ allows to define an exterior
differential $d$ on $\frak{g}^{\ast}$ by
\begin{equation}
d\omega\left(  X_{i},X_{j}\right)  =-C_{ij}^{k}\omega\left(
X_{k}\right) ,\;\omega\in\frak{g}^{\ast}.\label{MCG}
\end{equation}
This coboundary operator $d$ allows us to rewrite $\frak{g}$ as a
closed system of $2$-forms%
\begin{equation}
d\omega_{k}=-C_{ij}^{k}\omega_{i}\wedge\omega_{j},\;1\leq
i<j\leq\dim\left( \frak{g}\right)  ,\label{MC2}
\end{equation}
called the Maurer-Cartan equations of $\frak{g}$. The closeness
condition $d^{2}\omega_{i}=0$ for all $i$ is equivalent to the
Jacobi condition. To reformulate equation (\ref{BB}) in terms of
differential forms, we consider the linear subspace
$\mathcal{L}(\frak{g})=\mathbb{R}\left\{ d\omega_{i}\right\}
_{1\leq i\leq \dim\frak{g}}$ of $\bigwedge^{2}\frak{g}^{\ast}$
generated by the $2$-forms $d\omega_{i}$ \cite{C43}. In
particular, $\dim\mathcal{L}(\frak{g})=\dim\left( \frak{g}\right)
$ is equivalent to the condition $\dim\left( \frak{g}\right)
=\dim\left[ \frak{g},\frak{g}\right] $. Considering a generic
element $\omega=a^{i}d\omega_{i}\,\;\left(
a^{i}\in\mathbb{R}\right)  $ of $\mathcal{L}(\frak{g})$, then we
can always find a positive integer $j_{0}\left( \omega\right)
\in\mathbb{N}$ such that $\bigwedge^{j_{0}\left( \omega\right)
}\omega\neq0$ and $\bigwedge ^{j_{0}\left( \omega\right)
+1}\omega\equiv0$. Thus defining the quantity $j_{0}\left(
\frak{g}\right) $ as the maximal rank of generic elements,
\begin{equation}
j_{0}\left(  \frak{g}\right)  =\max\left\{  j_{0}\left(
\omega\right) \;|\;\omega\in\mathcal{L}(\frak{g})\right\},
\label{MCa1}
\end{equation}
we get a numerical invariant of the Lie algebra $\frak{g}$ that
can be used to rewrite equation (\ref{BB}) as
\begin{equation}
\mathcal{N}\left(  \frak{g}\right)  =\dim\frak{g}-2j_{0}\left(  \frak{g}%
\right). \label{BB1}
\end{equation}
The scalar $j_{0}\left(  \frak{g}\right)$ further determines the
number of internal labels necessary to describe a general
irreducible representation of $\frak{g}$ \cite{Ra,Sh,C43}.

\medskip

\section{Virtual copies of semisimple Lie algebras}

Ch. Quesne developed in \cite{Que} a method based on enveloping
algebras to compute the Casimir operators of semidirect products
$\frak{s}\overrightarrow {\frak{\oplus}}_{R}\frak{w}\left(
n\right)  $ of simple Lie algebras $\frak{s}$ with Heisenberg-Weyl
algebras $\frak{w}\left(  n\right)  $ spanned by $n$ pairs of
boson operators $b_{i},b_{j}^{\dagger}$ and the unit operator
$\mathbb{I}$. Since the Casimir operators of simple Lie algebras
are known \cite{Per}, the method reduces essentially to find
polynomials in the generators of
$\frak{s}\overrightarrow{\frak{\oplus}}_{R}\frak{w}\left( n\right)
$ that span a copy of $\frak{s}$ in the enveloping algebra of
$\frak{s}\overrightarrow{\frak{\oplus}}_{R}\frak{w}\left(
n\right)  $. More specifically, taking the basis $\left\{
E_{ij},b_{i},b_{j}^{\dagger
},\mathbb{I}\right\}  $ of $\frak{s}\overrightarrow{\frak{\oplus}}_{R}%
\frak{w}\left(  n\right)  $, where the $E_{ij}$ spans the Levi part $\frak{s}%
$, the author constructed quadratic operators
\begin{equation}
E_{ij}^{\prime}=E_{ij}\mathbb{I}-b_{i}^{\dagger}b_{j}\label{Ope1}
\end{equation}
satisfying the constraints
\begin{equation}
\left[  E_{ij}^{\prime},b_{k}^{\dagger}\right]  =\left[  E_{ij}^{\prime}%
,b_{k}\right]  =0,\;\left[  E_{ij},E_{kl}^{\prime}\right]  =\left[
E_{ij},E_{kl}\right]  ^{\prime}. \label{EA1}%
\end{equation}
These identities ensure that the $E_{ij}^{\prime}$ transform in a
similar way to the original generators of $\frak{s}$, i.e.,
satisfy the commutation relation
\begin{equation}
\left[  E_{ij}^{\prime},E_{kl}^{\prime}\right]=\mathbb{I}\left[
E_{ij},E_{kl}\right]^{\prime}, \label{Oper2}
\end{equation}
and therefore span a copy of the latter
in the enveloping algebra of $\frak{s}\overrightarrow{\frak{\oplus}}%
_{R}\frak{w}\left(  n\right)  $.\footnote{Observe that, since
$\mathbb{I}$ is the identity operator, it can be skipped from
equation (\ref{Oper2}).} Now, using the expression for a Casimir
operator of order $p$ of $\frak{s}$,
\begin{equation}
C_{p}=\sum_{i_{1}..i_{p}}E_{i_{1}i_{2}}...E_{i_{p}i_{1}},
\end{equation}
the invariant for the corresponding copy $\frak{s}^{\prime}$ in the enveloping algebra $\mathcal{U}%
\left(  \frak{s}\overrightarrow{\frak{\oplus}}_{R}\frak{w}\left(
n\right) \right)  $ is simply
\begin{equation}
C_{p}^{\prime}=\sum_{i_{1}..i_{p}}E_{i_{1}i_{2}}^{\prime}...E_{i_{p}i_{1}%
}^{\prime}.
\end{equation}
As a consequence of equation (\ref{EA1}), this operator satisfies
the identities
\begin{equation}
\left[  C_{p}^{\prime},b_{k}^{\dagger}\right]  =\left[  C_{p}^{\prime}%
,b_{k}\right]  =\left[  C_{p}^{\prime},E_{ij}\right]  =0, \label{AE2}%
\end{equation}
proving that $C_{p}^{\prime}$ is a Casimir operator of the Lie
algebra $\frak{s}\overrightarrow{\frak{\oplus}}_{R}\frak{w}\left(
n\right)$. Independence of the invariants computed by this method
follows easily by direct verification.

\smallskip

In this case, the validity of the method relies heavily on the
specific structure of these semidirect
products.\footnote{Specifically, it uses that
the semidirect product $\frak{s}\overrightarrow{\frak{\oplus}}_{R}%
\frak{w}\left(  n\right)  $ has a one dimensional centre generated
by the identity operator $\mathbb{I}$.} One can however ask
whether a similar ansatz can be made for other types of semidirect
products, including Lie algebras whose radical is not necessarily
a nilpotent algebra. It seems reasonable to require that such a
Lie algebra $\frak{g}$ satisfies the inequality
$\mathcal{N}(\frak{g})\geq \mathcal{N}(\frak{s})$, where
$\frak{s}$ is the Levi subalgebra of $\frak{g}$, although it is
not ensured a priori that the generalization of (\ref{Ope1}) leads
to functionally independent invariants of $\frak{g}$.

\bigskip

The general idea to expand the ansatz of \cite{Que} to more wide
types of semidirect products is to replace the operators
(\ref{Ope1}) by more general functions of the generators, and then
analyze under which constraints these operators transform in
similar way to the identities (\ref{EA1}). To this extent, we
start from an unspecified non-semisimple Lie algebra $\frak{g}$
with Levi decomposition
$\frak{g}=\frak{s}\overrightarrow{\frak{\oplus}}_{R}\frak{r}$,
where $\frak{s}$ denotes the Levi subalgebra and $\frak{r}$ the
radical.\footnote{Recall that the radical is the maximal solvable
ideal of $\frak{g}$, while the Levi subalgebra is the maximal
semisimple subalgebra.} Let $\left\{
X_{1},..,X_{n},Y_{1},..,Y_{m}\right\} $ be a basis such that
$\left\{  X_{1},..,X_{n}\right\}  $ spans $\frak{s}$ and $\left\{
Y_{1},..,Y_{m}\right\}  $ spans $\frak{r}$. We further suppose
that the structure tensor in $\frak{s}$ is given by
\begin{equation}
\left[  X_{i},X_{j}\right]  =C_{ij}^{k}X_{k}.\label{ST}
\end{equation}
We now define operators $X_{i}^{\prime}$ in the enveloping algebra
of $\frak{g}$ by means of
\begin{equation}
X_{i}^{\prime}=X_{i}\,f\left(  Y_{1},..,Y_{m}\right) +P_{i}\left(
Y_{1},..,Y_{m}\right)  ,\label{OP1}%
\end{equation}
where $P_{i}$ is a homogeneous polynomial of degree $k$ and $f$ is
homogeneous of degree $k-1$. In order to generalize the method of
\cite{Que}, we require
the constraints%
\begin{eqnarray}
\left[  X_{i}^{\prime},Y_{k}\right]    & =0,\label{Bed1}\\
\left[  X_{i}^{\prime},X_{j}\right]    & =\left[
X_{i},X_{j}\right] ^{\prime}:=C_{ij}^{k}\left(
X_{k}f+P_{k}\right).\label{Bed2}
\end{eqnarray}
to be satisfied for all generators. We thus have to analyze the
conditions that the functions $f$ and $P_{i}$ must satisfy in
order for these equations to hold. After development, equation
(\ref{Bed1}) can be rewritten as
\begin{equation}
\left[  X_{i}^{\prime},Y_{j}\right]  =\left[  X_{i}f,Y_{j}\right]  +\left[  P_{i}%
,Y_{j}\right]  =X_{i}\left[  f,Y_{j}\right]  +\left[
X_{i},Y_{j}\right]
\,f+\left[  P_{i},Y_{j}\right]  .\label{Eq1}%
\end{equation}
Since the quantities involved are homogeneous in the variables of
$\frak{s}$ and $\frak{r}$, we can reorder the terms according to
their degree. Observe that $X_{i}\left[  f,Y_{j}\right]  $ is a
homogeneous polynomial of degree $k-1$ in the variables $\left\{
Y_{1},..,Y_{m}\right\}  $, while $\left[ X_{i},Y_{j}\right]
\,f+\left[  P_{i},Y_{j}\right]  $ is of degree $k$. Therefore the
terms must fulfill the following identities
\begin{eqnarray}
\left[  f,Y_{j}\right] =0,\nonumber \\
\left[  X_{i},Y_{j}\right]  \,f+\left[  P_{i},Y_{j}\right]
=0.\label{Eq1A}
\end{eqnarray}
From the first one we conclude that $f$ must be a Casimir operator
of the radical $\frak{r}$. The second condition shows the
obstruction to the $P_{i}^{\prime}s$ to be also invariants of
$\frak{r}$. Observe in particular that if $Y_{j}$ commutes with
the generator $X_{i}$ of $\frak{s}$, the it must commute with the
operator $X_{i}^{\prime}$. We now focus on equation (\ref{Bed2}).
Development of the latter leads to the identity
\begin{equation}
\left[  X_{i}^{\prime},X_{j}\right]  =\left[  X_{i},X_{j}\right]
\,f-X_{i}\left[  X_{j},f\right]  +\left[  P_{i},X_{j}\right]
.\label{Eq2}
\end{equation}
The polynomial $\left[  X_{i},X_{j}\right]  \,f-X_{i}\left[
X_{j},f\right] $ is homogeneous of degree $k-1$ in the variables
of $\frak{r}$ and degree one in the variables of $\frak{s}$, while
$\left[  P_{i},X_{j}\right] $ is of degree zero in the latter set
of variables. This implies that the system
\begin{eqnarray}
\left[  X_{i},X_{j}\right]  \,f-X_{i}\left[  X_{j},f\right]    =C_{ij}%
^{k}X_{k}f\nonumber \\
\left[  P_{i},X_{j}\right]    =C_{ij}^{k}P_{k}\label{Eq3}
\end{eqnarray}
must be satisfied for arbitrary indices $i,j$. Because of
(\ref{ST}), the first equation reduces to
\begin{equation}
X_{i}\left[  X_{j},f\right]  =0,
\end{equation}
and since this holds for all generators of $\frak{s}$, we deduce
that $f$ must be a Casimir operator of $\frak{g}$ that depends
only on the variables of the radical $\frak{r}$. The second
condition tells that the $P_{i}^{\prime}s$ transform under the
$X_{j}^{\prime}s$ like the generators of the semisimple part
$\frak{s}$.

Taken together, it follows from these relations that the operators
$X_{i}^{\prime}$ transform like the generators of $\frak{s}$, up
to the additional factor $f$ given by a Casimir operator. More
precisely, for any indice $i,j$ we have the commutator
\begin{eqnarray}
\left[  X_{i}^{\prime},X_{j}^{\prime}\right]    & =\left[  X_{i}f+P_{i}%
,X_{j}f+P_{j}\right]  =\left[  X_{i}f+P_{i},X_{j}f\right]  +\left[
X_{i}f+P_{i},P_{j}\right] \nonumber \\
& =C_{ij}^{k}X_{k}f^{2}+C_{ij}^{k}P_{k}f+\left[
X_{i}^{\prime},P_{j}\right]
=f\left[  X_{i},X_{j}\right]  ^{\prime},%
\end{eqnarray}
where the last equality is a consequence of (\ref{Bed1}).

\medskip
Because of this factor $f$, we cannot in general find operators $X_{i}%
^{\prime\prime}$ in the enveloping algebra\footnote{The operators
$X_{i}^{\prime\prime}=X_{i}+\frac{1}{f}P_{i}$ would belong to the
fraction field of the enveloping algebra $\mathcal{U}\left(
\frak{g}\right)  $ \cite{AA}.} of $\frak{g}$ satisfying
(\ref{Bed1}) and (\ref{Bed2}) and such that
\[
\left[  X_{i}^{\prime\prime},X_{j}^{\prime\prime}\right]  =\left[  X_{i}%
,X_{j}\right]  ^{\prime\prime}.
\]

For this reason we will say that the operators $X_{i}^{\prime}$
generate a virtual copy of $\frak{s}$ in the enveloping algebra of
$\frak{g}$. We remark that only for the case where $f$ can be
identified with an identity operator, as happens for Heisenberg
radical, this virtual copy actually generates a copy in
$\mathcal{U}\left(  \frak{g}\right)  $. However, this functional
factor $f$ does not affect the validity of the method to compute
Casimir operators of $\frak{g}$ from those of the Levi part
$\frak{s}$.

\begin{theorem}
Let $\frak{s}$ be the Levi part of a Lie algebra $\frak{g}$ and
let $X_{i}^{\prime}=X_{i}\,f\left(  Y_{1},..,Y_{m}\right)
+P_{i}\left( Y_{1},..,Y_{m}\right)  $ be polynomials in the
generators of $\frak{g}$ satisfying equations (\ref{Eq1A}) and
(\ref{Eq3}). If $C=\sum\alpha ^{i_{1}..i_{p}}X_{i_{1}}..X_{i_{p}}$
is a Casimir operator of degree $p$ of $\frak{s}$, then
$C^{\prime}=\sum\alpha^{i_{1}..i_{p}}X_{i_{1}}^{\prime
}..X_{i_{p}}^{\prime}$ is a Casimir operator of $\frak{g}$ of
degree $p\deg f$. Moreover, $\mathcal{N}\left(  \frak{g}\right)
\geq\mathcal{N}\left( \frak{s}\right)  +1.$
\end{theorem}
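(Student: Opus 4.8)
The plan is to split the statement into its two assertions and handle them in sequence. For the first assertion I would argue that $C'$ commutes with every generator of $\frak{g}$, i.e. with both the $X_j$ and the $Y_k$. The commutation $[C',Y_k]=0$ is the easy half: since each factor $X_i'$ satisfies $[X_i',Y_k]=0$ by (\ref{Eq1A}) (equivalently (\ref{Bed1})), the Leibniz rule for commutators in $\mathcal{U}(\frak{g})$ gives $[C',Y_k]=\sum\alpha^{i_1..i_p}\sum_{r}X_{i_1}'\cdots[X_{i_r}',Y_k]\cdots X_{i_p}'=0$ term by term. For $[C',X_j]=0$ the key point is that the $X_i'$ transform under $\mathrm{ad}(X_j)$ exactly as the $X_i$ do up to the scalar-like factor $f$: from (\ref{Eq3}) we have $[X_i',X_j]=C_{ij}^k(X_k f+P_k)=C_{ij}^k X_k'$, and crucially $f$ is central in $\frak{g}$ (shown in the text preceding the theorem, via $X_i[X_j,f]=0$ for all $i$), so $f$ commutes through everything. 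Hence $\mathrm{ad}(X_j)$ acts on the subspace spanned by the $X_i'$ by the same structure constants $C_{ij}^k$ as on $\frak{s}$. Since $C=\sum\alpha^{i_1..i_p}X_{i_1}\cdots X_{i_p}$ is by hypothesis annihilated by $\mathrm{ad}(X_j)$ in $\mathcal{U}(\frak{s})$ — an identity that holds purely formally from the $C_{ij}^k$ and the coefficients $\alpha^{i_1..i_p}$ — the same formal computation with $X_i\mapsto X_i'$, $f$ pulled out as a central factor, yields $[C',X_j]=0$. Therefore $C'$ lies in the centre of $\mathcal{U}(\frak{g})$. The degree count $p\deg f$ is immediate: each $X_i'$ has degree $\deg f$ (it is $X_i f$, of degree $1+(k-1)=k=\deg f$, plus the lower-or-equal degree $P_i$ of degree $k$), so the product of $p$ of them has degree $p\deg f$ as claimed.

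I expect the main obstacle to be making precise the phrase "the same formal computation with $f$ pulled out as a central factor." One must check that centrality of $f$ in $\frak{g}$ is enough to reorder the $X_i f$ freely within a product $X_{i_1}'\cdots X_{i_p}'$ so as to mimic the $\frak{s}$-computation; this is fine because $f$, being in the centre of $\mathcal{U}(\frak{g})$, commutes with every $X_i$ and every $P_j$, so $X_{i_1}'\cdots X_{i_p}' = f^p\,X_{i_1}\cdots X_{i_p} + (\text{terms with at least one }P)$, and $\mathrm{ad}(X_j)$ preserves this filtration by the number of $P$'s thanks to the second equation of (\ref{Eq3}), $[P_i,X_j]=C_{ij}^k P_k$. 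One should note the subtlety that $C'$ is a priori only $\mathrm{ad}(X_j)$-invariant and $\mathrm{ad}(Y_k)$-invariant as an element of the enveloping algebra; invoking the standard fact (recalled in Section 2, via the symmetrization map) that an element of $\mathcal{U}(\frak{g})$ killed by all $\mathrm{ad}$-operators is central then finishes it.

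For the second assertion, $\mathcal{N}(\frak{g})\geq\mathcal{N}(\frak{s})+1$, I would argue at the level of the analytic realization (\ref{Rep1})–(\ref{sys}). The semisimple algebra $\frak{s}$ has $\mathcal{N}(\frak{s})=\operatorname{rank}\frak{s}$ independent Casimir invariants $C^{(1)},\dots,C^{(\ell)}$; by the first part, each produces an invariant $C'^{(a)}$ of $\frak{g}$, and in addition $f$ itself is an invariant of $\frak{g}$ depending only on the radical variables $Y_1,\dots,Y_m$. It remains to show these $\ell+1$ functions are functionally independent on $\frak{g}^\ast$. The function $f$ depends only on the $y$-variables dual to $\frak{r}$, whereas each $C'^{(a)}$, regarded as a polynomial, contains the leading term $f^{p_a}C^{(a)}(x)$ in which the genuine $\frak{s}$-variables $x_1,\dots,x_n$ appear nontrivially (the $C^{(a)}$ are the classical, non-constant invariants of $\frak{s}$); this already separates $f$ from the rest. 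Among the $C'^{(a)}$ themselves, functional independence can be transferred from that of the $C^{(a)}$: evaluate the Jacobian of $(C'^{(1)},\dots,C'^{(\ell)})$ with respect to the $x$-variables at a generic point where $f\neq 0$; modulo the $P$-corrections (which carry no $x$-dependence) this Jacobian is $f^{\sum(p_a-1)}$ times the Jacobian of $(C^{(1)},\dots,C^{(\ell)})$, which is nonzero since the classical Casimir invariants of a semisimple algebra are functionally independent. Hence we exhibit $\mathcal{N}(\frak{s})+1$ functionally independent invariants of $\frak{g}$, giving $\mathcal{N}(\frak{g})\geq\mathcal{N}(\frak{s})+1$. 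The delicate point here is the genericity of the evaluation point — one needs $f$ to be not identically zero on $\frak{g}^\ast$, which holds because $f$ is a nonzero polynomial (a genuine Casimir of the radical), and one needs the $x$-Jacobian of the $C^{(a)}$ to remain nonzero after specializing the $y$-variables, which is automatic since that Jacobian is a nonzero polynomial in the $x$'s alone.
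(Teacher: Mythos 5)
Your proposal is correct and follows essentially the same route as the paper: the first assertion via the Leibniz rule together with the relations $[X_i',Y_k]=0$, $[X_i',X_j]=C_{ij}^kX_k'$ and the centrality of $f$, and the second via the Jacobian of $(C'^{(1)},\dots,C'^{(\ell)})$ in the $\frak{s}$-variables, whose leading term in $x$-degree is a power of $f$ times the nonvanishing Jacobian of the $C^{(a)}$, with $f$ itself supplying the extra radical-only invariant. The only quibble is the exponent in that leading term: since $\partial(f^{p_a}C^{(a)})/\partial x_k=f^{p_a}\,\partial C^{(a)}/\partial x_k$, the prefactor is $f^{\sum_a p_a}$ rather than $f^{\sum_a(p_a-1)}$, a harmless slip that does not affect the argument.
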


\begin{proof}
It is clear from equations (\ref{Bed1}) and (\ref{Bed2}) that if
$C=\sum
\alpha^{i_{1}..i_{p}}X_{i_{1}}..X_{i_{p}}$ is a Casimir operator of $\frak{s}%
$, then $C^{\prime}=\sum\alpha^{i_{1}..i_{p}}X_{i_{1}}^{\prime}..X_{i_{p}%
}^{\prime}$ is a Casimir operator of $\frak{g}$. To prove the
second assertion, we observe that, since
$X_{i}^{\prime}=X_{i}f+P_{i}$, we can
rewrite $C^{\prime}$ as%
\begin{equation}
C^{\prime}=\sum\alpha^{i_{1}..i_{p}}\left(  X_{i_{1}}..X_{i_{p}}\,f^{p}%
+\sum_{t=1}^{p}\sum_{j_{1},..,j_{t}=i_{1}}^{i_{p}}X_{i_{1}}..P_{j_{1}%
}..P_{j_{t}}..X_{i_{p}}\,f^{p-t}\right)  . \label{ZerC}%
\end{equation}
For any $1\leq t\leq p$, the polynomial
\[
F_{\left[  j_{1},,.j_{t}\right]  }=\sum_{j_{1},..,j_{t}=i_{1}}^{i_{p}}%
X_{i_{1}}..P_{j_{1}}..P_{j_{t}}..X_{i_{p}}\,f^{p-t}%
\]
is homogeneous of degree $p-t$ in the generators $\left\{  X_{1}%
,..,X_{n}\right\}  $. In particular, $C_{p}^{\prime}$ contains
$C_{p}$ as the term of maximal degree in the latter set of
generators. Let $\left\{ C_{1},..,C_{l}\right\}  $ be a set of
independent Casimir operators of $\frak{s}$. We can thus find
indices $k_{1},..,k_{l}$ such that the associated Jacobian does
not vanish:
\begin{equation}
\frac{\partial\left\{  C_{1},..,C_{l}\right\}  }{\partial\left\{  X_{k_{1}%
},..,X_{k_{l}}\right\}  }\neq0. \label{Jac}%
\end{equation}
Let $d_{i}=\deg C_{i}$ denote the degree of each invariant. We now
rewrite $\left\{  C_{1}^{\prime},..,C_{l}^{\prime}\right\}  $
according to (\ref{ZerC}), and compute the Jacobian with respect
to the generators $\left\{  X_{k_{1}},..,X_{k_{l}}\right\}  $. We
explicitly obtain
\begin{equation}
\fl \frac{\partial\left\{  C_{1}^{\prime},..,C_{l}^{\prime}\right\}  }%
{\partial\left\{  X_{k_{1}},..,X_{k_{l}}\right\}  }=\det\left(
\begin{array}
[c]{ccc}%
\frac{\partial
C_{1}}{X_{k_{1}}}f^{d_{1}}+\sum_{t=1}^{d_{1}}\frac{\partial
}{\partial X_{k_{1}}}F_{1,[j_{1},,.j_{t}]} & .. & \frac{\partial C_{1}%
}{X_{k_{l}}}f^{d_{1}}+\sum_{t=1}^{d_{1}}\frac{\partial}{\partial X_{k_{l}}%
}F_{1,[j_{1},,.j_{t}]}\\
\vdots &  & \vdots\\
\frac{\partial
C_{l}}{X_{k_{1}}}f^{d_{l}}+\sum_{t=1}^{d_{l}}\frac{\partial
}{\partial X_{k_{1}}}F_{l,[j_{1},,.j_{t}]} & .. & \frac{\partial C_{l}%
}{X_{k_{l}}}f^{d_{l}}+\sum_{t=1}^{d_{l}}\frac{\partial}{\partial X_{k_{l}}%
}F_{l,[j_{1},,.j_{t}]}%
\end{array}
\right)  . \label{Det1}%
\end{equation}
Using the properties of determinants, it can be easily seen that
(\ref{Det1}) simplifies to
\begin{equation}
\fl \frac{\partial\left\{  C_{1}^{\prime},..,C_{l}^{\prime}\right\}  }%
{\partial\left\{  X_{k_{1}},..,X_{k_{l}}\right\}  }=\det\left(
\begin{array}
[c]{ccc}%
\frac{\partial C_{1}}{X_{k_{1}}} & .. & \frac{\partial C_{1}}{X_{k_{l}}}\\
\vdots &  & \vdots\\
\frac{\partial C_{l}}{X_{k_{1}}} & .. & \frac{\partial C_{l}}{X_{k_{l}}}%
\end{array}
\right)  f^{d_{1}+..+d_{l}}+\det\left(
\begin{array}
[c]{ccc}%
\sum_{t=1}^{d_{1}}\frac{\partial}{\partial
X_{k_{1}}}F_{1,[j_{1},,.j_{t}]} &
.. & \sum_{t=1}^{d_{1}}\frac{\partial}{\partial X_{k_{l}}}F_{1,[j_{1}%
,,.j_{t}]}\\
\vdots &  & \\
\sum_{t=1}^{d_{l}}\frac{\partial}{\partial
X_{k_{1}}}F_{l,[j_{1},,.j_{t}]} &
.. & +\sum_{t=1}^{d_{l}}\frac{\partial}{\partial X_{k_{l}}}F_{l,[j_{1}%
,,.j_{t}]}%
\end{array}
\right)  \label{Det2}%
\end{equation}
Since the first term of the latter equation coincides with $\frac
{\partial\left\{  C_{1},..,C_{l}\right\}  }{\partial\left\{  X_{k_{1}%
},..,X_{k_{l}}\right\}  }f^{d_{1}+..+d_{l}}$ and has maximal
degree in the $X_{i}^{\prime}$s, condition (\ref{Jac}) ensures
that (\ref{Det1}) does not vanish, i.e., the operators $\left\{
C_{1}^{\prime},..,C_{l}^{\prime }\right\}  $ are independent.
Finally, $f$ being a Casimir operator of $\frak{g}$ depending only
on the variables of the radical, we get $\mathcal{N}\left(
\frak{s}\right)  +1\leq\mathcal{N}\left(  \frak{g}\right) $.
\end{proof}

\medskip

Hence the independence of the invariants constructed starting from
those of the Levi part $\frak{s}$ is automatically ensured by the
conditions (\ref{Bed1}) and (\ref{Bed2}). This result further
allows us to obtain a criterion concerning the existence of
virtual copies of semisimple algebras in enveloping algebras of
semidirect products. The proof is a direct consequence of the
previous argument.

\begin{proposition}
Let $\frak{s}$ be the Levi subalgebra of the Lie algebra
$\frak{g}$. If $\mathcal{N}(\frak{g})\leq \mathcal{N}(\frak{s})$,
then no virtual copy of $\frak{s}$  in the enveloping algebra of
$\frak{g}$ exists.
\end{proposition}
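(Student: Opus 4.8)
The plan is to obtain the statement as the contrapositive of the Theorem, so the proof will be short. First I would assume, toward a contradiction, that a virtual copy of $\frak{s}$ in $\mathcal{U}(\frak{g})$ exists; that is, there are polynomial operators $X_i' = X_i\,f(Y_1,\dots,Y_m) + P_i(Y_1,\dots,Y_m)$ in the generators of $\frak{g}$, with $f$ and the $P_i$ as in (\ref{OP1}), satisfying the constraints (\ref{Eq1A}) and (\ref{Eq3}) (equivalently (\ref{Bed1}) and (\ref{Bed2})).

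Next I would apply the Theorem directly to this data. Its conclusion provides, from a fundamental basis $\{C_1,\dots,C_l\}$ of the Casimir operators of $\frak{s}$ (so $l = \mathcal{N}(\frak{s})$), a set of $l$ functionally independent invariants $\{C_1',\dots,C_l'\}$ of $\frak{g}$, together with the extra invariant $f$, which depends only on the radical generators and is therefore functionally independent of the $C_i'$ --- each $C_i'$ carrying $C_i$ as its component of maximal degree in the $X_j$. Hence $\mathcal{N}(\frak{g}) \geq l + 1 = \mathcal{N}(\frak{s}) + 1$, and since these quantities are non-negative integers, $\mathcal{N}(\frak{g}) > \mathcal{N}(\frak{s})$.

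Finally, this strict inequality contradicts the hypothesis $\mathcal{N}(\frak{g}) \leq \mathcal{N}(\frak{s})$, so the assumed virtual copy cannot exist, which is precisely the assertion. I would keep the actual write-up to a couple of lines, simply invoking the Theorem.

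As for the main obstacle: there is essentially none, since the proposition merely repackages the counting bound of the Theorem as a non-existence criterion. The only step that deserves a sentence of comment is the functional independence of $f$ from $\{C_1',\dots,C_l'\}$, and this is already settled inside the proof of the Theorem by the degree bookkeeping in the generators of $\frak{s}$ (the $C_i'$ have nontrivial top-degree content in the $X_j$, whereas $f$ involves only the $Y_k$); I would just point to it rather than repeat the argument.
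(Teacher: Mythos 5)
Your proposal is correct and follows exactly the paper's route: the paper itself states that "the proof is a direct consequence of the previous argument," i.e.\ the Proposition is the contrapositive of the Theorem's bound $\mathcal{N}(\frak{g})\geq\mathcal{N}(\frak{s})+1$. Your additional remark on the functional independence of $f$ from the $C_i'$ is the right point to flag, and it is indeed already handled inside the Theorem's proof.
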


This means specifically that for semidirect products with less
invariants than those of its Levi part, the latter provides no
valid information to simplify the computation of the Casimir
invariants. In these cases, an approach by enveloping algebras is
specially difficult, since the operators have to be found
directly. The ansatz further allows to establish some restrictions
on the structure of the radical $\frak{r}$ of such semidirect
products.

\begin{proposition}
Let $\frak{g}=\frak{s}\overrightarrow{\oplus}_{R}\frak{r}$ admit a
(virtual) copy of $\frak{s}$ in the enveloping algebra
$\mathcal{U}\left( \frak{g}\right)  $ generated by operators of
type (\ref{OP1}). Then the radical $\frak{r}$ is not Abelian.
\end{proposition}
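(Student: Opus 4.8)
The plan is to argue by contradiction: suppose the radical $\frak{r}$ is Abelian and show that no virtual copy of the type~(\ref{OP1}) can exist. If $\frak{r}$ is Abelian, then $\left[Y_i,Y_j\right]=0$ for all $i,j$, so the commutator of any polynomial in the $Y$'s with any $Y_k$ vanishes. In particular $\left[P_i,Y_j\right]=0$ for all indices, and the second equation in~(\ref{Eq1A}) forces $\left[X_i,Y_j\right]\,f=0$ for every $i,j$. Since we are working in the enveloping algebra (an integral domain when $\frak{g}$ admits a virtual copy — or one may simply observe that $f\neq0$ as a polynomial), this gives $\left[X_i,Y_j\right]=0$ for all $i,j$, i.e. the action of the Levi part $\frak{s}$ on the radical $\frak{r}$ is trivial.

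The next step is to derive the contradiction from triviality of the $\frak{s}$-action on $\frak{r}$. If $\left[\frak{s},\frak{r}\right]=0$, then the semidirect product $\frak{g}=\frak{s}\,\overrightarrow{\oplus}_R\,\frak{r}$ is actually a direct sum $\frak{g}=\frak{s}\oplus\frak{r}$ of ideals. In that situation, the invariants of $\frak{g}$ are precisely generated by the invariants of $\frak{s}$ together with the invariants of $\frak{r}$, so $\mathcal{N}(\frak{g})=\mathcal{N}(\frak{s})+\mathcal{N}(\frak{r})$; moreover the Casimir operators of $\frak{s}$ already lie in $\mathcal{U}(\frak{s})\subset\mathcal{U}(\frak{g})$ without any modification, meaning the operators $X_i^{\prime}=X_i f+P_i$ with $f$ a genuine invariant of $\frak{r}$ collapse to the trivial situation $X_i^{\prime}=X_i\cdot(\text{scalar})$. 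More to the point, the degree of $C^{\prime}$ is $p\deg f$, and for a nontrivial virtual copy we need $\deg f\geq1$ with $f$ a Casimir of $\frak{r}$ that is genuinely needed to build invariants of $\frak{g}$ not already coming from $\frak{s}$. When the product is direct, $f$ contributes nothing new: one simply recovers $C^{\prime}=C\cdot f^{p}$ as a product of an $\frak{s}$-invariant and an $\frak{r}$-invariant, which is not a fresh construction — the "virtual copy" degenerates and the ansatz~(\ref{OP1}) with $\deg f\geq1,\ P_i\neq0$ cannot be realized consistently, since the $P_i$ would have to transform like generators of $\frak{s}$ under~(\ref{Eq3}) while commuting with everything in $\frak{r}=\frak{g}/\frak{s}$, forcing $P_i=0$ and hence $X_i^{\prime}=X_i f$, which does not span a virtual copy in the intended sense.

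Putting these together, I would conclude: Abelian $\frak{r}$ $\Rightarrow$ $\left[\frak{s},\frak{r}\right]=0$ $\Rightarrow$ $\frak{g}=\frak{s}\oplus\frak{r}$ $\Rightarrow$ the operators~(\ref{OP1}) reduce to $X_i^{\prime}=X_i f$ with $P_i=0$, which do not constitute a (virtual) copy of $\frak{s}$ in the sense demanded (equivalently, they provide no new invariant beyond $C\cdot f^p$, and by Proposition~2 together with $\mathcal{N}(\frak{g})=\mathcal{N}(\frak{s})+\mathcal{N}(\frak{r})$ the increment is entirely accounted for by $\mathcal{N}(\frak{r})$ rather than by a virtual copy). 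This contradicts the hypothesis that $\frak{g}$ admits such a copy generated by operators of type~(\ref{OP1}), so $\frak{r}$ cannot be Abelian.

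The main obstacle I anticipate is making precise the sense in which the degenerate case $P_i=0$, $X_i^{\prime}=X_i f$ fails to be a genuine virtual copy — one must pin down in the definition that a virtual copy requires the $P_i$ to be nonzero (or that the construction yield invariants not obtainable directly from $\frak{s}$ and $\frak{r}$ separately). A cleaner route, if the algebra admits it, is to show directly from~(\ref{Eq1A}) and the Abelian hypothesis that the only solution has $P_i$ a Casimir of $\frak{r}$ for every $i$ and $X_i^{\prime}$ proportional to $X_i$, so that~(\ref{OP1}) offers nothing; the delicate point is ruling out exotic $P_i$ that happen to commute with all of $\frak{r}$ yet carry nontrivial $\frak{s}$-weight, which~(\ref{Eq3}) combined with $\left[\frak{s},\frak{r}\right]=0$ handles, since then $[P_i,X_j]=C_{ij}^k P_k$ with $P_k\in\mathcal{U}(\frak{r})$ and $X_j$ acting trivially on $\mathcal{U}(\frak{r})$ forces $P_k=0$.
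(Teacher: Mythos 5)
Your argument is correct and follows essentially the same route as the paper: from the second equation of (\ref{Eq1A}), an Abelian radical gives $[P_i,Y_j]=0$, hence $[X_i,Y_j]\,f=0$ and, since $f\neq 0$, a trivial action of $\frak{s}$ on $\frak{r}$, so that $\frak{g}$ degenerates to the direct sum $\frak{s}\oplus\frak{r}$. The paper stops at that point, taking the direct-sum conclusion itself as the contradiction with $\frak{g}$ being a genuine semidirect product; your further observation that $[\frak{s},\frak{s}]=\frak{s}$ together with (\ref{Eq3}) then forces all $P_k=0$ is a harmless (and correct) way of making the degeneration explicit.
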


The proof is a direct consequence of equation (\ref{Eq1A}). In
fact, if the operators $X_{ij}^{\prime}=X_{i}^{\prime}\,f\left(
Y_{1},..,Y_{m}\right) +P_{i}\left(  Y_{1},..,Y_{m}\right)  $
satisfy the constraint (\ref{Bed1}), by the second equation of
(\ref{Eq1A}) we have that
\[
\left[  X_{i},Y_{j}\right]  f+\left[  P_{i},Y_{j}\right]  =0.
\]
If $\frak{r}$ were Abelian, then obviously  $\left[
P_{i},Y_{j}\right]  =0$ for all indices $i,j$, and this would
imply, since $f\neq0$, that $\left[  X_{i},Y_{j}\right]  =0$ for
all generators $X_{i},Y_{j}$. But this tells that the
representation $R$ of $\frak{s}$ on $\frak{r}$ is trivial, hence
the algebra $\frak{g}$ reduces to a direct sum
$\frak{s}\oplus\frak{r}$.

\section{Invariants of the inhomogeneous Hamilton algebra}

The Hamilton algebra and its various extensions arise naturally in
the analysis of relativity groups for noninertial states in the
quantum mechanical frame \cite{SL1}. It is known that projective
representations on physical states lead to the conclusion that
relativity groups arise as subgroups of the automorphism group of
the Heisenberg algebra $\frak{h}_{N}$. In particular, it follows
that the Hamilton group, which is isomorphic to the semidirect
product of $\frak{so}(N)$ with the Heisenberg algebra
$\frak{h}_{N}$, coincides with the relativity group for
noninertial frames in classical Hamilton mechanics \cite{SL2}. Its
Lie algebra $Ha\left( N\right) $ is given, over the basis $\left\{
J_{ij},G_{k},F_{k},R\right\}$, by the brackets
\begin{equation}
\begin{array}
[c]{l}%
\left[  J_{ij},J_{kl}\right]  =\delta_{i}^{l}J_{jk}+\delta_{j}^{k}%
J_{il}-\delta_{j}^{l}J_{ik}-\delta_{i}^{k}J_{jl},\quad \left[
G_{i},F_{j}\right]
=\delta_{i}^{j}R.\\
\left[  J_{ij},G_{k}\right]  =-\delta_{i}^{k}G_{j}+\delta_{k}^{j}%
G_{i},\quad \left[  J_{ij},F_{k}\right]  =-\delta_{i}^{k}F_{j}+\delta_{k}^{j}%
F_{i},
\end{array}\label{Kl1}
\end{equation}
The action of $\frak{so}(N)$ over the Heisenberg radical is easily
seen to be described by the fundamental $N$-dimensional tensor
representation $\Lambda$ of $\frak{so}\left( N\right) $. Therefore
its dimension is $\dim Ha\left(  N\right) =\frac{1}{2}N\left(
N+3\right)+1$. Following \cite{Que}, this algebra has $\left[
\frac{N}{2}\right] +1$ Casimir operators. In this case, the
operators
\[
J_{ij}^{\prime}=J_{ij}R+G_{i}F_{j}-G_{j}F_{i}
\]
generate the searched copy of $\frak{so}\left(  N\right)$. Thus
the Casimir operators of $Ha(N)$ are obtained as the symmetrized
polynomials of the functions $C_{k}$ arising as coefficients of
the polynomial
\begin{equation*}
Cas_{N}=\det\left|  A_{N}-T\mathrm{Id}_{N}\right|  =\sum_{l=1}^{N}%
C_{2l}T^{2N-2l}+T^{2N},
\end{equation*}
where $A_{N}$ is the $N\times N$ matrix defined by
\begin{equation}
A_{N}=\left(
\begin{array}
[c]{cccc}%
0 & J_{12}^{\prime} & ... & J_{1N}^{\prime}\\
-J_{12}^{\prime} & 0 & ... & J_{2N}^{\prime}\\
\vdots & \vdots &  & \vdots\\
-J_{1N}^{\prime} & -J_{2N}^{\prime} & ... & 0
\end{array}
\right)  . \label{SON}
\end{equation}

An immediate question that arises naturally from this is whether
the various extensions of the Hamilton algebra \cite{SL1,SL2} can
be analyzed by means of virtual copies of $\frak{so}(N)$ in the
corresponding enveloping algebras. In particular, insertion of the
operators generating the corresponding copies into matrix
(\ref{SON}) would provide the fundamental set of invariants for
each of these extensions for arbitrary dimension $N$.

\medskip

We first consider the inhomogeneous Hamilton algebra $IHa(N)$,
which is the first step in the study of quantum noninertial states
\cite{SL3}. This (non-central) extension is obtained by addition
of the generators $\left\{ Q_{k},P_{k},E,T\right\} $ to those of
$Ha(N)$: The brackets are those of (\ref{Kl1}), to which the
following are added:
\begin{equation}
\fl \begin{tabular}
[c]{l}%
$\left[  J_{ij},Q_{k}\right]  =-\delta_{i}^{k}Q_{j}+\delta_{k}^{j}%
Q_{i},\quad \left[  J_{ij},P_{k}\right]  =-\delta_{i}^{k}P_{j}+\delta_{k}^{j}%
P_{i},\quad \left[  G_{i},Q_{j}\right]  =\delta_{i}^{j}T,$\\
$\left[  F_{i},P_{j}\right]  =\delta_{i}^{j}T,\quad \left[
E,G_{i}\right]=-P_{i},\quad \left[  E,F_{i}\right]  =Q_{i},\quad \left[  E,R\right]  =2T.$%
\end{tabular}\label{Kl2}
\end{equation}
It is observed that $R$ is no more a central element, but now $T$
plays the role of central charge. In particular, the radical is a
solvable non-nilpotent Lie algebra. The first question that arises
is how many invariants this algebra $IHa\left(  N\right)$ has.

\begin{lemma}
For any $N\geq 3$ following identity holds:
\begin{equation}
\mathcal{N}\left(IHa\left(  N\right)  \right) =\left[
\frac{N}{2}\right]  +1.
\end{equation}
\end{lemma}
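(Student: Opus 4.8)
The plan is to compute $\mathcal{N}(IHa(N))$ by directly evaluating the right-hand side of formula (\ref{BB}) (equivalently (\ref{BB1})), that is, by determining the generic rank of the matrix $A(IHa(N)) = (C_{ij}^k x_k)$ associated to the commutator table over the basis $\{J_{ij}, G_k, F_k, R, Q_k, P_k, E, T\}$. First I would record the dimension: $\dim IHa(N) = \dim Ha(N) + 2N + 2 = \tfrac12 N(N+3) + 1 + 2N + 2 = \tfrac12 N(N+7) + 3$. Then I would set up coordinates $(j_{ij}, g_k, f_k, r, q_k, p_k, e, t)$ dual to the generators and write down $A$ as a block matrix organized by the brackets (\ref{Kl1}) and (\ref{Kl2}). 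The key observation is that $T$ is central, so the column and row of $A$ indexed by $t$ vanish identically; this immediately drops the rank by (at least) $2$ relative to a faithful coadjoint action, which is exactly what one needs to land on $\mathcal{N} = [N/2] + 1$ rather than something smaller.

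Next I would exhibit a generic point at which the rank attains its maximum. A convenient choice is to work on the open set where $r \neq 0$ and where the $\frak{so}(N)$ block, evaluated on a generic skew-symmetric matrix $(j_{ij})$, has its maximal rank $2[N/2]$ (this is the standard computation behind $\mathcal{N}(\frak{so}(N)) = [N/2]$, which we may invoke via (\ref{BB})). On such a point one checks that: (i) the $2N$ coordinates $g_k, f_k$ pair nondegenerately through the bracket $[G_i, F_j] = \delta_i^j R$ (since $r \neq 0$), contributing rank $2N$; (ii) similarly, the inhomogeneous generators $Q_k, P_k$ pair with $G_k, F_k$ through $[G_i, Q_j] = \delta_i^j T$, $[F_i, P_j] = \delta_i^j T$, but these involve only the central coordinate $t$, which is "used up" — so here the $E$-action $[E, G_i] = -P_i$, $[E, F_i] = Q_i$, $[E, R] = 2T$ is what makes $Q_k, P_k$ visible; (iii) the $J$-rotations act on all the vector generators, but on the generic $\frak{so}(N)$-orbit these rotations are already accounted for by the $2[N/2]$ rank of the $\frak{so}(N)$ block together with the vector blocks. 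Carefully bookkeeping, the generic rank should come out to $\dim IHa(N) - ([N/2] + 1)$, after which (\ref{BB}) gives the claim.

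The main obstacle I anticipate is the bookkeeping in step (ii)–(iii): because $[G_i, Q_j]$ and $[F_i, P_j]$ both land in the single central direction $T$, and because $E$ couples $R$ to $T$ as well, there is a genuine risk of either double-counting or missing a unit of rank in the interaction between the $\{G,F,R\}$ sector (the old Heisenberg-type radical of $Ha(N)$) and the new $\{Q,P,E,T\}$ sector. The clean way to handle this is to pass to the Maurer–Cartan picture of Section 2: view a generic $\omega = a^i d\omega_i \in \mathcal{L}(IHa(N))$, and compute $j_0(IHa(N))$ via (\ref{MCa1}), using that the $2$-form associated to $T$ is $dt = 0$ and that $dr = -2\, de\wedge dt + (\text{terms in } d j\wedge d(\cdot))$ collapses. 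One then shows $2 j_0(IHa(N)) = 2 j_0(Ha(N)) + 2N = \dim IHa(N) - [N/2] - 1$, reducing everything to the known value $\mathcal{N}(Ha(N)) = [N/2] + 1$ established in \cite{Que} plus an accounting of the $2N+2$ new coordinates, of which $2N$ contribute to the rank (the $q_k, p_k$ block) and $2$ (the directions $t$ and one combination involving $e$) do not. The restriction $N \geq 3$ is needed precisely so that $\frak{so}(N)$ is simple and the generic-rank statements for its vector representation hold without small-dimension exceptions.
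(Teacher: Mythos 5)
Your overall strategy --- computing the generic rank of $A(IHa(N))$, equivalently $j_{0}(IHa(N))$ in the Maurer--Cartan picture, and splitting the count into an $\frak{so}(N)$ block plus a radical block --- is the same route the paper takes. However, the quantitative accounting at the heart of your plan is off by $2$, in a way that would not deliver the lemma. You assert $2j_{0}(IHa(N))=2j_{0}(Ha(N))+2N$, on the grounds that of the $2N+2$ new coordinates ``$2N$ contribute to the rank and $2$ (the directions $t$ and one combination involving $e$) do not.'' Since $\dim IHa(N)=\dim Ha(N)+2N+2$, this would give $\mathcal{N}(IHa(N))=\mathcal{N}(Ha(N))+2=\left[\frac{N}{2}\right]+3$, contradicting the very statement you are proving. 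The correct count is $2j_{0}(IHa(N))=2j_{0}(Ha(N))+2N+2$: among the new directions only $t$ fails to contribute (its row of $A$ vanishes because $T$ is central), whereas $e$ does contribute, being paired with $r$ through $\left[E,R\right]=2T$; correspondingly $r$, which lay in the generic kernel for $Ha(N)$ (where $R$ was central), leaves the kernel. The net effect is that $r$ is traded for $t$, so the kernel dimension --- hence $\mathcal{N}$ --- is unchanged. Note also that no kernel element has an $e$-component, since the invariants of $IHa(N)$ are annihilated by $\widehat{R}=-2t\,\partial/\partial e$.

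The same confusion reappears in your Maurer--Cartan sketch, where you write ``$dt=0$'' and attach the $\omega_{E}\wedge\omega_{R}$ term to $d\omega_{R}$. In fact it is $d\omega_{E}$ that vanishes, while $d\omega_{T}=\sum_{k}\left(\theta_{k}\wedge\varphi_{k}+\eta_{k}\wedge\chi_{k}\right)+2\omega_{E}\wedge\omega_{R}$ is the crucial \emph{non}vanishing form: the paper's proof rests precisely on showing that $\bigwedge^{2N+1}d\omega_{T}\neq 0$ while $\bigwedge^{p}\left(\xi_{2}-b_{6}d\omega_{T}\right)=0$ for $p\geq 2N$, so that $d\omega_{T}$ alone realizes the maximal radical rank $2N+1$ and, containing no $\omega_{ij}$, adds to $j_{0}\left(\frak{so}(N)\right)$ without interference from the Levi block. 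Repairing your bookkeeping along these lines --- and actually verifying the two wedge-power claims, which constitute the real content of the lemma and are absent from the plan --- would turn your outline into the paper's proof.
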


\begin{proof}
We consider the Maurer-Cartan equations (MC) of the algebra to
prove it \cite{C43}. Let $\left\{
\omega_{ij},\theta_{k},\eta_{k},\varphi_{k},\chi
_{k},\omega_{R},\omega_{E},\omega_{T}\right\}  $ be the dual basis
to $\left\{  J_{ij},G_{k},F_{k},Q_{k},P_{k},R,E,T\right\}  $. The
Maurer-Cartan
equations are easily seen to be%
\begin{equation}
\begin{tabular}
[c]{l}%
$d\omega_{ij}=-\sum_{k=1}^{N}\omega_{ik}\wedge\omega_{jk},\;d\theta_{k}%
=\sum_{j=1}^{N}\omega_{kj}\wedge\theta_{j},\;d\eta_{k}=\sum_{j=1}^{N}%
\omega_{kj}\wedge\eta_{j},$\\
$d\varphi_{k}=\sum_{j=1}^{N}\omega_{kj}\wedge\varphi_{j}+\omega_{E}\wedge
\eta_{k},\;\;d\chi_{k}=\sum_{j=1}^{N}\omega_{kj}\wedge\chi_{j}-\omega
_{E}\wedge\theta_{k},$\\
$d\omega_{R}=\sum_{k=1}^{N}\theta_{k}\wedge\eta_{k},\;d\omega_{T}=\sum
_{k=1}^{N}\left(
\theta_{k}\wedge\varphi_{k}+\eta_{k}\wedge\chi_{k}\right)
+2\omega_{E}\wedge\omega_{R},$\\
$d\omega_{E}=0.$%
\end{tabular}
\end{equation}
Any generic element $\omega\in\mathcal{L}\left(  \frak{g}\right)
$:
\begin{equation}
\omega=a^{ij}d\omega_{ij}+b_{1}^{k}d\theta_{k}+b_{2}^{k}d\eta_{k}+b_{3}%
^{k}d\varphi_{k}+b_{4}^{k}d\chi_{k}+b_{5}d\omega_{R}+b_{6}d\omega_{T},
\end{equation}
where $\;a^{ij},b_{j}^{k}\in\mathbb{R}$ are arbitrary constants,
can be decomposed as $\omega=\xi_{1}+\xi_{2}$, where $\xi_{1}=$
$a^{ij}d\omega
_{ij}\in\mathcal{L}\left(  \frak{s}\right)  $ and $\xi_{2}=b_{1}^{k}%
d\theta_{k}+b_{2}^{k}d\eta_{k}+b_{3}^{k}d\varphi_{k}+b_{4}^{k}d\chi_{k}%
+b_{5}d\omega_{R}+b_{6}d\omega_{T}\in\mathcal{L}\left(
\frak{r}\right)  $. A routine but tedious computation shows that
\begin{equation}
\bigwedge^{p}\left(  \xi_{2}-b_{6}d\omega_{T}\right)
=0,\;p\geq2N.\label{PLR}
\end{equation}
This equation follows essentially from the fact that
$\bigwedge^{N} d\theta_{k}=\bigwedge^{N} d\eta_{k}=\bigwedge^{N+1}
d\varphi_{k}=\bigwedge^{N+1} d\chi _{k}=0$ and that the terms in
the MC equations of these generators involve all rotation 1-forms
$\omega_{ij}$. We further observe that $d\omega_{T}$ involves all
1-forms associated to generators of the radical with the exception
of $T$, and satisfies the
following identity%
\begin{equation}
\fl\bigwedge^{2N+1}d\omega_{T}=\pm2\left(  2N+1\right)  !\theta_{1}%
\wedge..\wedge\theta_{N}\wedge\varphi_{1}\wedge..\wedge\varphi_{N}\wedge
\chi_{1}\wedge..\wedge\chi_{N}\wedge\eta_{1}\wedge..\wedge\eta_{N}\wedge
\omega_{R}\wedge\omega_{E}.\label{Me1}%
\end{equation}
This proves that the equality $j_{0}\left(  \xi_{2}\right) =2N+1$
is satisfied only if $b_{6}\neq0$. For this reason we fix
$\xi_{2}=d\omega_{T}$. This choice has the advantage that no
element of the dual space to $\frak{so}\left( N\right)$ appears in
the expression of $\xi_{2}$. We now consider the Maurer-Cartan
equations of the Levi part. Since $j_{0}\left( \frak{so}\left(
N\right) \right) =\frac{1}{2}\left( \frac{N\left( N-1\right)
}{2}+\left[ \frac{N}{2}\right]  \right)  ,$ there exist
non-vanishing coefficients $\alpha^{ij}$ such that
\begin{equation}
\bigwedge^{j_{0}\left(  \frak{so}\left(  N\right)  \right)  }\xi_{1}%
=\bigwedge^{j_{0}\left(  \frak{so}\left(  N\right)  \right)
}\left( a^{ij}d\omega_{ij}\right)  \neq0.\label{Me2}
\end{equation}
Taking the 2-form $\xi_{1}+d\omega_{T}$,  the wedge products can
be rewritten as a sum
\begin{equation}
\bigwedge^{p}\left(  \xi_{1}+d\omega_{T}\right)
=\sum_{k=1}^{p}\left(
\begin{array}
[c]{c}%
p\\
k
\end{array}
\right)  \left(  \bigwedge^{k}\xi_{1}\right)  \wedge\left(
\bigwedge ^{p-k}d\omega_{T}\right)  .
\end{equation}
The conditions (\ref{Me1}) and (\ref{Me2}) imply that $k\leq
j_{0}\left( \frak{so}\left( N\right)  \right)  $ and
$p-k\leq2N+1$, respectively. This means that $j_{0}\left(
IHa\left( N\right) \right) \leq j_{0}\left( \frak{so}\left(
N\right) \right) +2N+1$. On the other hand, since $\xi_{1}$ and
$d\omega_{T}$ have no terms in common, it follows at once that
\begin{equation}
\bigwedge^{j_{0}\left(  \frak{so}\left(  N\right)  \right)
+2N+1}\xi _{1}+d\omega_{T}\neq0,
\end{equation}
showing that
\begin{equation}
j_{0}\left(  IHa\left(  N\right)  \right)  =\frac{N\left(  N-1\right)  }%
{2}-\frac{1}{2}\left[  \frac{N}{2}\right]  +2N+1.
\end{equation}
Applying formula (\ref{BB1}) we finally get
\begin{equation}
\mathcal{N}\left(  IHa\left(  N\right)  \right)  =\left[
\frac{N}{2}\right] +1.\label{ZdI1}
\end{equation}
The latter equation shows that adding the new generators $\left\{  Q_{k}%
,P_{k},E,T\right\}$ has no consequence on the number of
invariants.
\end{proof}

We conclude that the conditions to apply the generalized ansatz
developed earlier are satisfied. Observe that since $T$ is a
central element of $IHa(N)$ (and the only invariant of the
radical), a power of it must be the function $f$ multiplying the
rotation generators $J_{ij}$ in the operators generating the
virtual copy.

\begin{proposition}
For any $N\geq3$, the operators%
\begin{equation}
\fl J_{ij}^{\prime}=J_{ij}T^{2}+\left(
G_{i}Q_{j}-G_{j}Q_{i}\right) T+\left( F_{i}P_{j}-F_{j}P_{i}\right)
T+\left( P_{i}Q_{j}-P_{j}Q_{i}\right) R
\end{equation}
generate a virtual copy of $\frak{so}\left( N\right)  $ in the
enveloping algebra of $IHa\left(  N\right)  $. The Casimir
operators of the latter are obtained as the symmetrized
polynomials of the functions $C_{k}$ arising as coefficients of
the characteristic
polynomial%
\[
Cas_{N}=\det\left|  A_{N}-T\mathrm{Id}_{N}\right|  =\sum_{l=1}^{N}%
C_{2l}T^{2N-2l}+T^{2N},
\]
where $A_{N}$ is the matrix (\ref{SON}).
\end{proposition}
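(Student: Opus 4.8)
First I would verify that the proposed operators $J_{ij}^{\prime}$ satisfy the two structural requirements of the general ansatz, namely equations (\ref{Eq1A}) and (\ref{Eq3}) with $f=T^{2}$. Concretely, I would write $J_{ij}^{\prime}=J_{ij}\,f+P_{ij}$ with $f=T^{2}$ and
$P_{ij}=(G_{i}Q_{j}-G_{j}Q_{i})T+(F_{i}P_{j}-F_{j}P_{i})T+(P_{i}Q_{j}-P_{j}Q_{i})R$, noting that $f$ is indeed homogeneous of degree $1$ and $P_{ij}$ homogeneous of degree $3$ in the radical generators (here $R$ counts with degree $1$, being a radical generator). Since $T$ is central in $IHa(N)$, the condition $[f,Y_{k}]=0$ is immediate, so the first equation of (\ref{Eq1A}) holds. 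The remaining work is the commutator bookkeeping:

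\textbf{Step 1 (commutation with the radical).} I would check $[J_{ij}^{\prime},Y_{k}]=0$ for each radical generator $Y_{k}\in\{G,F,Q,P,R,E,T\}$. This reduces, via the second equation of (\ref{Eq1A}), to verifying
$[J_{ij},Y_{k}]\,T^{2}+[P_{ij},Y_{k}]=0$. The only nontrivial cases are $Y_{k}\in\{G_{l},F_{l},E\}$, since $Q,P,R,T$ commute with the rotations and one checks directly that they commute with $P_{ij}$ as well (using $[G_{i},Q_{j}]=[F_{i},P_{j}]=\delta_{i}^{j}T$, $[G_{i},F_{j}]=\delta_{i}^{j}R$, all other brackets among $G,F,Q,P,R,T$ vanishing, and $T$ central). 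For $Y_{k}=G_{l}$: the term $[J_{ij},G_{l}]T^{2}=(-\delta_{i}^{l}G_{j}+\delta_{l}^{j}G_{i})T^{2}$ must be cancelled by $[P_{ij},G_{l}]$; the contribution comes from $[(F_{i}P_{j}-F_{j}P_{i})T,G_{l}]=0$ (since $G$ commutes with $F,P,T$), from $[(P_{i}Q_{j}-P_{j}Q_{i})R,G_{l}]=0$, and from $[(G_{i}Q_{j}-G_{j}Q_{i})T,G_{l}]=(G_{i}[Q_{j},G_{l}]-G_{j}[Q_{i},G_{l}])T=-(\delta_{j}^{l}G_{i}-\delta_{i}^{l}G_{j})T^{2}$, which indeed cancels. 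The case $Y_{k}=F_{l}$ is symmetric with the roles of the two $T$-terms swapped. The case $Y_{k}=E$ is the interesting one and uses $[E,G_{i}]=-P_{i}$, $[E,F_{i}]=Q_{i}$, $[E,R]=2T$, $[E,T]=0$: here $[J_{ij},E]T^{2}=0$, so I must show $[P_{ij},E]=0$. Expanding, $[(G_{i}Q_{j}-G_{j}Q_{i})T,E]=-([E,G_{i}]Q_{j}-[E,G_{j}]Q_{i})T=(P_{i}Q_{j}-P_{j}Q_{i})T$, similarly $[(F_{i}P_{j}-F_{j}P_{i})T,E]=-(Q_{i}P_{j}-Q_{j}P_{i})T$, and $[(P_{i}Q_{j}-P_{j}Q_{i})R,E]=-(P_{i}Q_{j}-P_{j}Q_{i})[E,R]=-2(P_{i}Q_{j}-P_{j}Q_{i})T$; since $[P_{i},Q_{j}]=0$ the first two combine to $(P_{i}Q_{j}-P_{j}Q_{i})T-(P_{j}Q_{i}-P_{i}Q_{j})T=2(P_{i}Q_{j}-P_{j}Q_{i})T$ and the three terms sum to zero. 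I expect this $E$-computation to be the main obstacle, since it is where the solvable non-nilpotent structure of the radical genuinely intervenes, and it dictates the appearance of the $R$-term with coefficient matched to the $2$ in $[E,R]=2T$.

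\textbf{Step 2 ($\frak{so}(N)$-covariance).} Next I would verify the second equation of (\ref{Eq3}), namely $[P_{ij},J_{kl}]=C_{ij,kl}^{mn}P_{mn}$, i.e.\ that $P_{ij}$ transforms as a $2$-form (antisymmetric rank-two tensor) under $\frak{so}(N)$. This is immediate from the fact that $G,F,Q,P$ each carry the vector representation $\Lambda$ of $\frak{so}(N)$ and $R,T$ are scalars, so each of the three building blocks $G_{[i}Q_{j]}$, $F_{[i}P_{j]}$, $P_{[i}Q_{j]}$ is a tensor of the required type; the first equation of (\ref{Eq3}) then follows automatically from Step 1 and the identity $[J_{ij},J_{kl}]f-J_{ij}[J_{kl},f]=C_{ij,kl}^{mn}J_{mn}f$, which holds since $[J_{kl},f]=[J_{kl},T^{2}]=0$. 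Having checked (\ref{Eq1A}) and (\ref{Eq3}), the Theorem applies directly: the $J_{ij}^{\prime}$ generate a virtual copy of $\frak{so}(N)$, $f=T^{2}$ is a Casimir operator of $IHa(N)$ (it is central), and for every Casimir operator $C$ of $\frak{so}(N)$ the symmetrized operator $C^{\prime}$ built from the $J_{ij}^{\prime}$ is a Casimir operator of $IHa(N)$, with independence guaranteed by the Theorem.

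\textbf{Step 3 (counting and closed form).} Finally, the classical Casimir operators of $\frak{so}(N)$ are exactly the coefficients $C_{2l}$ of the characteristic polynomial $\det|A_{N}-T\,\mathrm{Id}_{N}|$ where $A_{N}$ is the antisymmetric matrix with entries $J_{ij}$ — there are $[N/2]$ of them for $\frak{so}(N)$, and $[N/2]+1$ for $\frak{so}(2k)$ counting the Pfaffian; replacing $J_{ij}$ by $J_{ij}^{\prime}$ in $A_{N}$ and symmetrizing produces, by the Theorem, functionally independent Casimir operators of $IHa(N)$. Adjoining the central generator $T$ itself, we obtain at least $[N/2]+1$ independent invariants, which by the Lemma ($\mathcal{N}(IHa(N))=[N/2]+1$) is the full number; hence these operators form a fundamental basis. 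This completes the proof. The one point requiring a little care in the write-up is the ordering/symmetrization in $C^{\prime}$: since the $J_{ij}^{\prime}$ need not commute, one must be precise that $C^{\prime}$ means the expression obtained from the $\frak{so}(N)$-invariant polynomial by substituting $J_{ij}\mapsto J_{ij}^{\prime}$ and then applying the symmetrization map $Sym$ — but this is exactly the content of the Theorem and needs no new argument.
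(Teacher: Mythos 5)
Your overall strategy is exactly the paper's (which simply asserts the verification of (\ref{Bed1}) and (\ref{Bed2}) by direct computation and defers the commutators to Appendix A), but your execution of the commutator bookkeeping in Step 1 contains several false statements about the structure constants of $IHa(N)$, and the cancellation mechanism you describe is not the one that actually occurs. First, $Q_k$ and $P_k$ do \emph{not} commute with the rotations: by (\ref{Kl2}), $\left[J_{ij},Q_{k}\right]=-\delta_{i}^{k}Q_{j}+\delta_{k}^{j}Q_{i}$ and likewise for $P_k$, so the cases $Y_k=Q_l,P_l$ are genuinely nontrivial and must be checked (they do close: the $(G_iQ_j-G_jQ_i)T$ term cancels $\left[J_{ij},Q_l\right]T^2$ and the $(F_iP_j-F_jP_i)T$ term cancels $\left[J_{ij},P_l\right]T^2$). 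Second, in the case $Y_k=G_l$ your claims $\left[(F_iP_j-F_jP_i)T,G_l\right]=0$ ``since $G$ commutes with $F$'' and $\left[(P_iQ_j-P_jQ_i)R,G_l\right]=0$ are both wrong: $\left[G_l,F_i\right]=\delta_l^{i}R$ and $\left[G_l,Q_j\right]=\delta_l^{j}T$ are the defining Heisenberg brackets. The two contributions are in fact $\bigl(-\delta_l^{i}RP_j+\delta_l^{j}RP_i\bigr)T$ and $\bigl(-\delta_l^{j}P_iT+\delta_l^{i}P_jT\bigr)R$, which are individually nonzero and cancel only against \emph{each other} (using that $R,T,P_k$ mutually commute). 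So the $R$-term of $J_{ij}^{\prime}$ is not needed solely for the $E$-commutator, as your narrative suggests; it is also essential to compensate the $(F_iP_j-F_jP_i)T$ term against $G_l$ (and symmetrically the $(G_iQ_j-G_jQ_i)T$ term against $F_l$). Your $E$-computation is correct and is indeed the step that fixes the coefficient of the $R$-term via $\left[E,R\right]=2T$.

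The conclusion you reach is nonetheless true — the terms you overlooked happen to cancel — and Steps 2 and 3 (covariance of $P_{ij}$ as an antisymmetric $\frak{so}(N)$-tensor, appeal to Theorem 1 for independence, and the count $\left[\frac{N}{2}\right]+1$ from Lemma 1) are sound. One small further slip: $f=T^{2}$ is homogeneous of degree $2$ in the radical generators, not degree $1$, consistent with $\deg P_{ij}=3=k$ and $\deg f=k-1$. As written, though, the central computation that constitutes the whole proof would not survive checking line by line; you should redo the $G_l$ and $F_l$ cases keeping all three building blocks, and add the $Q_l$ and $P_l$ cases.
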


The proof is straightforward and follows by direct computation.
From (\ref{Kl1}) and (\ref{Kl2}) we get that $R,T$ commute with
all generators but $E$, which does not appear in the expression of
the operators $J_{ij}^{\prime}$. To verify equations (\ref{Bed1})
and (\ref{Bed2}), it suffices to compute the commutators of $\
$the basis elements $\left\{
J_{ij},G_{k},F_{k},Q_{k},P_{k},R,E,T\right\}  $ with the quadratic
polynomials  $\left(  G_{i}Q_{j}-G_{j}Q_{i}\right)  $, $\left(  F_{i}%
P_{j}-F_{j}P_{i}\right)  $ and $\left(
P_{i}Q_{j}-P_{j}Q_{i}\right)  $. We omit the explicit
computations, which are mechanical but quite lengthy. The main
commutators to reproduce these commutators are given in Appendix
A.

\subsection{Invariants of the centrally extended Hamilton algebras}

According to the quantum formulation, the realization of physical
states and the projective representations of relativity groups are
equivalent to unitary representations of central extensions.
Therefore the interesting object to be studied in the context of
noninertial states are the central extension of the inhomogeneous
Hamilton group \cite{SL2}. It can be easily proved by means of
cohomological tools \cite{Az2} that the inhomogeneous Hamilton
algebra $IHa(N)$ admits a three dimensional central extension
denoted by $QHa\left( N\right)$. The new central generators
$L,M,A$ are related to time-energy and position momentum relation,
the mass generator and reciprocal symmetry, respectively.
$QHa\left( N\right)$ is the maximal centrally extended Lie algebra
obtained from $IHa(N)$. Taking a basis $\left\{
J_{ij},G_{k},F_{k},P_{k,}Q_{k},R,E,T,L,A,M\right\}$, where $L,A,M$
are the generators of the (maximal) central extension, the
commutation relations are given by
\begin{equation}
\fl
\begin{tabular}
[c]{l}
$\left[  J_{ij},J_{kl}\right]  =\delta_{i}^{l}J_{jk}+\delta_{j}^{k}%
J_{il}-\delta_{j}^{l}J_{ik}-\delta_{i}^{k}J_{jl},\;\left[  J_{ij}%
,G_{k}\right]  =-\delta_{i}^{k}G_{j}+\delta_{k}^{j}G_{i},$\\
$\left[  J_{ij},F_{k}\right]  =-\delta_{i}^{k}F_{j}+\delta_{k}^{j}%
F_{i},\;\left[  J_{ij},P_{k}\right]  =-\delta_{i}^{k}P_{j}+\delta_{k}^{j}%
P_{i},\;\left[  J_{ij},Q_{k}\right]  =-\delta_{i}^{k}Q_{j}+\delta_{k}^{j}%
Q_{i},$\\
$\left[  G_{i},F_{j}\right]  =\delta_{i}^{j}R,\;\left[
G_{i},Q_{j}\right] =\delta_{i}^{j}T,\;\left[  F_{i},P_{j}\right]
=\delta_{i}^{j}T,\;\left[ P_{i},Q_{j}\right]
=\delta_{i}^{j}L,\;\left[  G_{i},P_{j}\right]  =\delta
_{i}^{j}M,$\\
$\left[  F_{i},Q_{j}\right]  =\delta_{i}^{j}A,\;\left[
E,G_{i}\right] =-P_{i},\;\left[  E,F_{i}\right]  =Q_{i},\;\left[
E,T\right]  =-L,\;\left[E,R\right]  =2T.$
\end{tabular}
\end{equation}

Again, the use of differential forms allows us to compute the
number of invariants for arbitrary values of $N$.

\begin{lemma}
For any $N\geq3$ the following equality holds%
\[
\mathcal{N}\left( QHa\left( N\right) \right)  =\left[
\frac{N}{2}\right] +4.
\]
\end{lemma}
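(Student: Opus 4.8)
The plan is to mimic exactly the Maurer-Cartan argument used for the previous Lemma on $IHa(N)$, now including the three extra central 1-forms $\omega_L,\omega_A,\omega_M$ dual to $L,A,M$. First I would write out the full set of Maurer-Cartan equations of $QHa(N)$ using (\ref{MCG})--(\ref{MC2}) and the given brackets: the rotation forms $\omega_{ij}$, the forms $\theta_k,\eta_k,\varphi_k,\chi_k$ dual to $G_k,F_k,Q_k,P_k$, and the central-type forms $\omega_R,\omega_E,\omega_T,\omega_L,\omega_A,\omega_M$. The key new relations are $d\omega_L=\sum_k(\varphi_k\wedge\theta_k)$ (from $[P_i,Q_j]=\delta_i^jL$), $d\omega_M=\sum_k(\theta_k\wedge\chi_k)$ (from $[G_i,P_j]=\delta_i^jM$), $d\omega_A=\sum_k(\eta_k\wedge\varphi_k)$ (from $[F_i,Q_j]=\delta_i^jA$), together with the modified $d\omega_T=\sum_k(\theta_k\wedge\varphi_k+\eta_k\wedge\chi_k)+2\omega_E\wedge\omega_R$ and $d\omega_L$ now also receiving a contribution from $[E,T]=-L$, so that $d\omega_L=\sum_k\varphi_k\wedge\theta_k+\omega_E\wedge\omega_T$ (signs to be fixed by (\ref{MCG})). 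Also $d\omega_E=0$ still, and the rotation and vector forms have the same shape as before.

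Next I would decompose a generic $\omega=\xi_1+\xi_2\in\mathcal{L}(QHa(N))$ with $\xi_1=a^{ij}d\omega_{ij}\in\mathcal{L}(\mathfrak{so}(N))$ and $\xi_2$ the part built from radical forms, and determine $j_0(\xi_2)$. The decisive point is that $d\omega_T,d\omega_L,d\omega_M,d\omega_A$ are all $\mathfrak{so}(N)$-free 2-forms on the span of $\{\theta_k,\eta_k,\varphi_k,\chi_k,\omega_R,\omega_E,\omega_T,\ldots\}$, and one must find the maximal generic rank among their real linear combinations. I expect that, as in the $IHa(N)$ case, the generic element with nonzero $d\omega_T$-component has maximal rank, and one computes $\bigwedge^{j}(\text{generic }\xi_2)$ explicitly: $d\omega_T$ already uses up $\theta_k,\eta_k,\varphi_k,\chi_k,\omega_R,\omega_E$ ($4N+2$ one-forms, giving rank $2N+1$); then $d\omega_L,d\omega_M,d\omega_A$ can contribute the forms $\omega_L,\omega_A,\omega_M$ themselves (each appearing only in its own MC equation and nowhere else), adding three more independent wedge factors. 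Thus I anticipate $j_0(\xi_2)=2N+1+? $ — more precisely one must check whether the three central pieces genuinely raise the rank by $3$ or whether linear dependencies among $\{\sum\varphi_k\wedge\theta_k,\sum\theta_k\wedge\chi_k,\sum\eta_k\wedge\varphi_k\}$ on the already-saturated vector forms cut this down. Since $\omega_L,\omega_A,\omega_M$ are genuinely new basis forms, each wedge $d\omega_T\wedge(\text{stuff}\wedge\omega_L)\wedge(\text{stuff}\wedge\omega_A)\wedge(\text{stuff}\wedge\omega_M)$ is nonzero, so the rank goes up by exactly $3$: $j_0(\xi_2)=2N+4$.

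Then, as in the previous proof, since $\xi_1$ (built only from $\omega_{ij}$) and the maximal-rank $\xi_2$ share no 1-forms, I would use the binomial expansion $\bigwedge^p(\xi_1+\xi_2)=\sum_k\binom{p}{k}(\bigwedge^k\xi_1)\wedge(\bigwedge^{p-k}\xi_2)$ to conclude $j_0(QHa(N))=j_0(\mathfrak{so}(N))+(2N+4)$, where $j_0(\mathfrak{so}(N))=\tfrac12\big(\tfrac{N(N-1)}{2}+[\tfrac N2]\big)$. Plugging into (\ref{BB1}), $\mathcal{N}(QHa(N))=\dim QHa(N)-2j_0(QHa(N))$ with $\dim QHa(N)=\tfrac12 N(N+3)+1+5=\tfrac12 N(N+5)/1\ldots$ — I would compute $\dim QHa(N)=\binom N2+4N+6$ and verify the arithmetic collapses to $[\tfrac N2]+4$. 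The main obstacle is the bookkeeping in step two: carefully confirming that the three additional central 2-forms $d\omega_L,d\omega_A,d\omega_M$ each increase the generic rank by one rather than lying (generically) in the ideal already generated by $d\omega_T$ and the vector forms — this is where the $IHa(N)$ computation genuinely differs and where a sign or a hidden dependence could change the final additive constant. Once that rank count is nailed down, the rest is the same routine wedge-algebra and arithmetic as in the preceding lemma.
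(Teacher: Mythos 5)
Your overall strategy (Maurer--Cartan equations, decomposition $\omega=\xi_1+\xi_2$, binomial expansion of $\bigwedge^p(\xi_1+\xi_2)$, then formula (\ref{BB1})) is exactly the paper's, but the decisive rank count in your second step is resolved incorrectly, and the error is fatal to the arithmetic. You claim that $d\omega_L,d\omega_A,d\omega_M$ ``can contribute the forms $\omega_L,\omega_A,\omega_M$ themselves (each appearing only in its own MC equation and nowhere else), adding three more independent wedge factors,'' so that $j_0(\xi_2)=2N+4$. This reverses the structure of the Maurer--Cartan equations: $d\omega_L$ is a $2$-form written entirely in the $1$-forms dual to the generators whose brackets \emph{produce} $L$ (namely $\chi_k\wedge\varphi_k$ and $\omega_E\wedge\omega_T$); the $1$-form $\omega_L$ itself would appear on the right-hand side of some $d\omega_W$ only if $[L,\cdot]$ had a nonzero $W$-component. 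Since $L,A,M$ are central, $\omega_L,\omega_A,\omega_M$ never occur in any element of $\mathcal{L}(QHa(N))$ and cannot raise the rank at all. The only radical $1$-forms available are $\theta_k,\eta_k,\varphi_k,\chi_k,\omega_R,\omega_E,\omega_T$ ($4N+3$ of them), so $j_0(\xi_2)\leq 2N+1$, and $\bigwedge^{2N+1}d\omega_T\neq 0$ already attains this bound; hence $j_0(QHa(N))=j_0(\mathfrak{so}(N))+2N+1$, the \emph{same} as for $IHa(N)$. That is the paper's key observation, and it is what produces the answer: the dimension grows by $3$ while $j_0$ is unchanged, so $\mathcal{N}(QHa(N))=\mathcal{N}(IHa(N))+3=\left[\frac{N}{2}\right]+4$.

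Your own count does not ``collapse to $\left[\frac{N}{2}\right]+4$'': with $j_0=j_0(\mathfrak{so}(N))+2N+4$ and $\dim QHa(N)=\frac{N(N-1)}{2}+4N+6$, formula (\ref{BB1}) gives $\left[\frac{N}{2}\right]-2$. To your credit you flagged this step as the one place a hidden dependence could change the additive constant, but the resolution you chose is the wrong one. (A minor separate point: the value $j_0(\mathfrak{so}(N))=\frac12\left(\frac{N(N-1)}{2}+\left[\frac{N}{2}\right]\right)$ you quote from the paper must have a minus sign before $\left[\frac{N}{2}\right]$ for (\ref{BB1}) to return $\mathcal{N}(\mathfrak{so}(N))=\left[\frac{N}{2}\right]$; this is a typo in the source, not your error.)
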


\begin{proof}
The proof is essentially a consequence of the argument in Lemma 1.
For $QHa\left(  N\right)  $ the Maurer-Cartan equations are%
\begin{equation}
\begin{tabular}
[c]{l}%
$d\omega_{ij}=-\sum_{k=1}^{N}\omega_{ik}\wedge\omega_{jk},\;d\theta_{k}%
=\sum_{j=1}^{N}\omega_{kj}\wedge\theta_{j},\;d\eta_{k}=\sum_{j=1}^{N}%
\omega_{kj}\wedge\eta_{j},$\\
$d\varphi_{k}=\sum_{j=1}^{N}\omega_{kj}\wedge\varphi_{j}+\omega_{E}\wedge
\eta_{k},\;\;d\chi_{k}=\sum_{j=1}^{N}\omega_{kj}\wedge\chi_{j}-\omega
_{E}\wedge\theta_{k},$\\
$d\omega_{R}=\sum_{k=1}^{N}\theta_{k}\wedge\eta_{k},\;d\omega_{T}=\sum
_{k=1}^{N}\left(
\theta_{k}\wedge\varphi_{k}+\eta_{k}\wedge\chi_{k}\right)
+2\omega_{E}\wedge\omega_{R},$\\
$d\omega_{L}=\sum_{k=1}^{N}\chi_{k}\wedge\varphi_{k}-\omega_{E}\wedge
\omega_{T},\;d\omega_{M}=\sum_{k=1}^{N}\theta_{k}\wedge\chi_{k},$\\
$d\omega_{A}=\sum_{k=1}^{N}\eta_{k}\wedge\varphi_{k},\;d\omega_{E}=0.$%
\end{tabular}\label{MCGM}
\end{equation}
Since $\left\{  L,M,A\right\}  $ are central elements, the forms
$\omega _{A},\omega_{M}$ and $\omega_{L}$ do not appear in the
terms of a generic element of $\mathcal{L}\left(  QHa\left(
N\right)  \right)  $. Therefore, the same 2-form used in Lemma 1
provides the maximal rank of an element in $\mathcal{L}\left(
QHa\left(  N\right)  \right)  $. We thus conclude that
$j_{0}\left(  QHa\left(  N\right)  \right)  =\frac{1}{2}\left(
\frac{N\left( N-1\right)  }{2}-\left[  \frac{N}{2}\right]
+4N+2\right)  $ and again, by
formula (\ref{BB1}), we obtain the number of invariants%
\begin{equation*}
\fl \mathcal{N}\left(  QHa\left(  N\right)  \right)  =\frac{N^{2}+7N+12}%
{2}-\left(  \frac{N\left(  N-1\right)  }{2}-\left[
\frac{N}{2}\right] +4N+2\right)  =\left[  \frac{N}{2}\right]  +4.
\end{equation*}
\end{proof}

The problem therefore reduces to compute the $\left[
\frac{N}{2}\right]+1$ non-central invariants. Observe that,
according to Theorem 1, one of these invariants must be an
invariant depending only on the generators of the radical.
Moreover, in contrast to the invariants of the inhomogeneous
Hamilton algebra, which depended upon all its generators, for the
quantum Hamilton $QHa\left(  N\right)$ algebra we find an
important property that turns out to be essential to find closed
formulae for the invariants.

\begin{lemma}
For any generalized Casimir invariant $\Phi$ of QHa$\left(
N\right)  $ the
constraint%
\[
\frac{\partial\Phi}{\partial e}=0
\]
is satisfied.
\end{lemma}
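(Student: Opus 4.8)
The plan is to use the analytic realization (\ref{Rep1}) and to show that the differential operator $\widehat{E}$ associated to the generator $E$ can be eliminated from the defining system (\ref{sys}) by a change of variables. Writing $\left\{ j_{ij},g_k,f_k,q_k,p_k,r,e,t,l,a,m\right\}$ for the coordinates on $QHa(N)^{\ast}$ dual to $\left\{ J_{ij},G_k,F_k,Q_k,P_k,R,E,T,L,A,M\right\}$, formula (\ref{Rep1}) gives $\widehat{E}$ as the vector field built from the brackets of $E$ with the remaining generators; reading off (\ref{MCGM}) (equivalently the commutation relations) one has, up to signs,
\[
\widehat{E}=g_k\frac{\partial}{\partial p_k}-f_k\frac{\partial}{\partial q_k}+2r\frac{\partial}{\partial t}-t\frac{\partial}{\partial l},
\]
while every other $\widehat{X}_i$ contains no $\partial/\partial e$ term because $E$ never appears on the right-hand side of a bracket $[X_i,\,\cdot\,]$ with $X_i\neq E$. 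Consequently the only equation in (\ref{sys}) involving $\partial\Phi/\partial e$ is $\widehat{E}\Phi=0$ itself, and that equation contains no $e$-derivative either. Hence $e$ is a cyclic variable for the whole system: if $\Phi$ solves (\ref{sys}) then so does $\Phi$ with $e$ frozen, and one may look for a fundamental basis of invariants among functions independent of $e$.

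To turn this observation into the stated identity I would argue as follows. Since $\widehat{E}$ has no $\partial/\partial e$ component, the system (\ref{sys}) restricted to the hypersurface $e=\mathrm{const}$ is still a well-defined involutive system whose solution space has the same dimension $\mathcal{N}(QHa(N))$ (the rank computation in Lemma (the dimension count) is unaffected by deleting the coordinate $e$, because the $e$-column and $e$-row of $A(\frak g)$ contribute nothing: $A$ is antisymmetric, the $e$-column is $\widehat{E}$ which kills $e$, and no other field touches $e$). Therefore a maximal set of functionally independent invariants can be chosen in the subring $C^\infty$ of functions not depending on $e$, and any such invariant trivially satisfies $\partial\Phi/\partial e=0$. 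Finally, any generalized Casimir invariant is, by definition, a function of a fundamental basis; since each basis element can be taken $e$-independent, every invariant is $e$-independent, which is the claim after passing back through the symmetrization map.

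The one point that needs care — and is really the crux — is the claim that a fundamental basis can be chosen entirely within the $e$-independent functions, i.e. that deleting $e$ does not lower the number of solutions. The clean way to see this is: because $\partial/\partial e$ does not occur in any $\widehat{X}_i$, the coordinate function $e$ is itself annihilated by every $\widehat{X}_i$ except $\widehat{E}$, and $\widehat{E}e=0$ as well by the bracket table ($[E,E]=0$); so $e$ would be a Casimir invariant. But $E$ is not central (it has nonzero brackets with $G_i,F_i,T,R$), so $e$ cannot be an invariant — contradiction — unless in fact $\widehat{E}$ acts nontrivially, which it does, and this forces $e$ to be expressible, along the flow of $\widehat E$, as a function determined by the other coordinates on each common level set of the remaining invariants; equivalently, the projection $QHa(N)^\ast\to QHa(N)^\ast/\{e\text{-axis}\}$ is compatible with the whole system, so pulling back a fundamental basis downstairs gives a fundamental basis upstairs. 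I would spell this out by exhibiting, from $\widehat E$ above, the obvious invariant combinations that are manifestly $e$-free (for instance $r$ and $t$ enter $\widehat E$ but $l$ can be traded against them, and $g_kq_k+f_kp_k$-type quadratics are $\widehat E$-invariant), and then invoke the standard fact that an involutive first-order system in which one variable is cyclic admits a complete set of first integrals independent of that variable. That last structural fact, rather than any long computation, is the main obstacle to write down rigorously.
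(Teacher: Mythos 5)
Your argument rests on a false premise, and the conclusion you need actually follows from the opposite observation. In the realization (\ref{Rep1}), the operator $\widehat{X}_i$ contains a $\partial/\partial e$ term precisely when $[X_i,E]\neq 0$, since the coefficient of $\partial/\partial x_j$ in $\widehat{X}_i$ is the dual expression of $[X_i,X_j]$. What you have tested instead is whether $E$ appears on the right-hand side of brackets, which controls whether the coordinate $e$ occurs as a \emph{coefficient} of the vector fields (i.e.\ whether $e$ is a cyclic variable), not whether $\partial/\partial e$ occurs. Since $[R,E]=-2T$, $[G_k,E]=P_k$, $[F_k,E]=-Q_k$ and $[T,E]=L$, the derivative $\partial/\partial e$ does appear in several equations of the system (\ref{sys}); in particular $\widehat{R}=-2t\,\partial/\partial e$ and nothing else, because $E$ is the only generator with $[R,\cdot]\neq 0$. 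The single equation $\widehat{R}\Phi=-2t\,\partial\Phi/\partial e=0$ then forces $\partial\Phi/\partial e=0$ identically, which is the entire proof and is exactly what the paper does.

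Note moreover that if your premise were correct --- no $\widehat{X}_i$ containing $\partial/\partial e$ --- the lemma would be false rather than true: the coordinate function $e$ would then be annihilated by every $\widehat{X}_i$ and would itself be a generalized Casimir invariant with $\partial e/\partial e=1$. The ``standard fact'' you invoke at the end (that a cyclic variable admits a complete set of first integrals independent of it) points in the wrong direction here: cyclicity of $e$ in the coefficients, absent any $\partial/\partial e$ term, would make $e$ an invariant, not exclude it. The contradiction you flag in your final paragraph is therefore real, and it signals that the premise, not the conclusion, has to be abandoned; once $\widehat{R}$ is computed correctly the result is immediate and none of the machinery about restricting to $e=\mathrm{const}$ or counting ranks is needed.
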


It follows at once from the representation (\ref{sys}) that the
differential operator associated to the generator $R$ is given by
\[
\widehat{R}=-2t\frac{\partial}{\partial e},
\]
thus the invariants do not depend on the variable $e$. An
important consequence of this fact is that the generators
$\left\{T,R,M,L,A\right\}$ commute with all other remaining
generators $J_{ij},G_{k},F_{k},Q_{k},P_{k}$. This will be crucial
in finding new operators in the enveloping algebra that generate a
virtual copy of $\frak{so}(N)$.

\begin{proposition}
For any $N\geq3$, the operators%
\begin{eqnarray}
\fl \widehat{J}_{ij}= J_{ij}\left(  T^{2}%
+RL-AM\right)  +T\left(  G_{i}Q_{j}-G_{j}Q_{i}\right)  +T\left(  F_{i}%
P_{j}-F_{j}P_{i}\right)  +L\left(  G_{i}F_{j}-G_{j}F_{i}\right)  +\nonumber \\
  +R\left(  P_{i}Q_{j}-P_{j}Q_{i}\right)  +M\left(  Q_{i}F_{j}%
-Q_{j}F_{i}\right)  +A\left( P_{i}G_{j}-P_{j}G_{i}\right),
\label{NE3}
\end{eqnarray}
generate a virtual copy of $\frak{so}\left(  N\right)  $ in the
enveloping algebra of $QHa\left(  N\right)  $.
\end{proposition}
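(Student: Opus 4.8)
The plan is to verify that the operators $\widehat{J}_{ij}$ of \eqref{NE3} fit the template \eqref{OP1} with $f = T^{2}+RL-AM$ (which, being a polynomial in the central generators $T,R,L,A,M$ of the radical, is automatically a Casimir operator of $\frak{r}$, and in fact of $QHa(N)$ by Lemma 3), and then to check that conditions \eqref{Eq1A} and \eqref{Eq3} hold; Theorem 1 then delivers the conclusion. Concretely, I would write $\widehat{J}_{ij} = J_{ij}\,f + P_{ij}$ with the polynomial part $P_{ij}$ equal to the sum of the six quadratic bilinears $T(G_iQ_j-G_jQ_i)$, $T(F_iP_j-F_jP_i)$, $L(G_iF_j-G_jF_i)$, $R(P_iQ_j-P_jQ_i)$, $M(Q_iF_j-Q_jF_i)$, $A(P_iG_j-P_jG_i)$.

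First I would dispatch the analogue of \eqref{Eq1A}: since $T,R,L,A,M$ are central, $[f,Y_k]=0$ for every radical generator, so the first equation is immediate. For the second, $[J_{ij},Y_k]\,f + [P_{ij},Y_k]=0$, I would exploit Lemma 3, which says $\{T,R,M,L,A\}$ commute with all of $J_{ij},G_k,F_k,Q_k,P_k$: hence the only generators $Y_k$ for which $[J_{ij},Y_k]\neq 0$ are $G,F,Q,P$ themselves, and for those one checks, using the brackets $[G_i,F_j]=\delta_i^jR$, $[G_i,Q_j]=\delta_i^jT$, $[F_i,P_j]=\delta_i^jT$, $[P_i,Q_j]=\delta_i^jL$, $[G_i,P_j]=\delta_i^jM$, $[F_i,Q_j]=\delta_i^jA$, that the contraction of each bilinear against, say, $G_k$ produces exactly the terms $-[J_{ij},G_k]\,f$. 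This is the computation alluded to in the text as ``mechanical but quite lengthy''; the pattern is that each of the six central coefficients is paired with precisely the bilinear whose bracket with a radical generator reproduces it, and the orthogonal-rotation index structure $-\delta_i^kY_j+\delta_k^jY_i$ matches the antisymmetrized product. I would present one representative case, e.g. $[\widehat{J}_{ij},G_k]=0$, in full and assert the rest by symmetry.

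Next I would verify \eqref{Eq3}, i.e.\ that $[P_{ij},J_{kl}]=C_{ij}^{mn}P_{mn}$, meaning the polynomial part transforms as a rank-two antisymmetric $\frak{so}(N)$-tensor. Each of the six bilinears is built from pairs of vectors $G,F,Q,P$ of $\frak{so}(N)$ (each transforming in the defining representation $\Lambda$ via $[J_{ij},V_k]=-\delta_i^kV_j+\delta_k^jV_i$), antisymmetrized in the free indices; such a combination $V_{[i}W_{j]}$ is manifestly an antisymmetric tensor, and multiplication by the $\frak{so}(N)$-scalars $T,R,L,A,M$ (Lemma 3) does not disturb this. So the first equation of \eqref{Eq3} reduces to $[J_{ij},f]=0$, again Lemma 3, and the second holds termwise. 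With \eqref{Eq1A} and \eqref{Eq3} established, Theorem 1 immediately gives that $\widehat{J}_{ij}$ generate a virtual copy of $\frak{so}(N)$ and that substituting them into the classical $\frak{so}(N)$ invariant formulae — i.e.\ the matrix \eqref{SON} with $\widehat{J}_{ij}$ in place of $J_{ij}^{\prime}$ — yields Casimir operators of $QHa(N)$.

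**Main obstacle.** The only real work is the bookkeeping in \eqref{Eq1A}: there are six bilinears and four noncentral radical generators to bracket against, and one must see that the cross-terms (e.g.\ $[G_iQ_j-G_jQ_i,F_k]$, which by $[G_i,F_j]=\delta_i^jR$ produces terms with coefficient $R$, not $T$) cancel against the corresponding cross-terms coming from the other bilinears — it is exactly the presence of all six bilinears with the matched central prefactors that makes this work, and with fewer terms (as in the $IHa(N)$ case, where $L,M,A$ are absent) one gets the smaller operator of the previous proposition. I would organize this as a table of brackets, relegated to an appendix in the style of Appendix A, and in the main text simply state that the verification of \eqref{Bed1}--\eqref{Bed2} is direct.
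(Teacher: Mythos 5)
Your overall strategy --- write $\widehat{J}_{ij}=J_{ij}\,f+P_{ij}$ with $f=T^{2}+RL-AM$ and verify (\ref{Eq1A}) and (\ref{Eq3}) by direct computation --- is exactly the paper's (the paper states only that the proof ``follows by direct verification'' of (\ref{Bed1}) and (\ref{Bed2}) and relegates the commutators to Appendix A). But there is a genuine gap: you assert that $T,R,L,A,M$ are all central in $QHa(N)$, and this is false. Only $L,A,M$ are central; the brackets $[E,T]=-L$ and $[E,R]=2T$ are nonzero, and Lemma 3 only says that $T,R,L,A,M$ commute with the generators \emph{other than} $E$. Consequently (i) your claim that $f$ is ``automatically'' a Casimir operator of the radical because it is a polynomial in central elements is unjustified --- $f$ is invariant only because of the specific cancellation $[E,T^{2}+RL]=-2TL+2TL=0$, so the combination $T^{2}+RL-AM$ matters and not just its ingredients; and (ii) your verification of the second equation of (\ref{Eq1A}) treats only $Y_{k}\in\{G,F,Q,P\}$, whereas the case $Y_{k}=E$ is precisely where the real work lies. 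Since $[J_{ij},E]=0$, one must show $[P_{ij},E]=0$, and this is a nontrivial cancellation: $[T(G_{i}Q_{j}-G_{j}Q_{i}),E]$ produces $T(P_{i}Q_{j}-P_{j}Q_{i})+L(G_{i}Q_{j}-G_{j}Q_{i})$, the term $T(F_{i}P_{j}-F_{j}P_{i})$ contributes another copy of $T(P_{i}Q_{j}-P_{j}Q_{i})$ plus $L(F_{i}P_{j}-F_{j}P_{i})$, and these are killed only by $[R(P_{i}Q_{j}-P_{j}Q_{i}),E]=-2T(P_{i}Q_{j}-P_{j}Q_{i})$ together with the $E$-bracket of $L(G_{i}F_{j}-G_{j}F_{i})$, which supplies $-L(G_{i}Q_{j}-G_{j}Q_{i})-L(F_{i}P_{j}-F_{j}P_{i})$. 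A bookkeeping organized under the premise that $T$ and $R$ commute with everything would never generate these terms and so would miss the one check that actually uses the full structure of $f$.

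The rest of your argument --- the $\frak{so}(N)$-equivariance of the antisymmetrized bilinears, the reduction of the first equation of (\ref{Eq3}) to $[J_{ij},f]=0$, and the appeal to Theorem 1 for the independence of the resulting invariants --- is sound and consistent with the paper. To close the gap you need only add the bracket against $E$ to your table (the relevant entries $[G_{i}Q_{j},E]$, $[F_{i}P_{j},E]$, $[G_{i}F_{j},E]$, $[P_{i}G_{j},E]$, $[Q_{i}F_{j},E]$, $[TG_{i}Q_{j},E]$, $[TF_{i}P_{j},E]$, $[RP_{i}Q_{j},E]$ are exactly the ones the paper lists at the end of Appendix A for this purpose) and exhibit the cancellation above.
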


The proof, once more, follows by direct verification of the
conditions (\ref{Bed1}) and (\ref{Bed2}). As before, we omit the
long verification. The main commutators of quadratic polynomials
in the generators are given in Appendix A.\newline We concluide
that a fundamental set of invariants of $QHa(N)$ for arbitrary $N$
is given by $\left\{L,A,M,T^2+RN-AM,Sym(C_{2l}^{\prime})\right\}$,
where $Sym(C_{2l}^{\prime})$ denotes the symmetrization of the
polynomials $C_{2l}$ obtained inserting the operators (\ref{NE3})
into the matrix (\ref{SON}) and computing the characteristic
polynomial.

\medskip
We already observed that $QHa(N)$ is a maximal central extension
of $IHa(N)$. The remaining extensions are either one dimensional,
with central generator $L$,$A$ or $M$, or two dimensional, with
central charges $\left\{L,A\right\}$, $\left\{L,M\right\}$ or
$\left\{A,M\right\}$. In view of the preceding results, it seems
reasonable to think that slight modifications of (\ref{NE3}) will
also provide the corresponding operators that generate the virtual
copy of $\frak{so}(N)$ in the enveloping algebra of these
extensions. The Maurer-Cartan equations can be easily obtained
from (\ref{MCGM}), and using them it can be easily shown that
$\mathcal{N}(IHa(N)\uplus \left\langle
\Lambda\right\rangle)=\left[\frac{N}{2}\right]+2$ and
$\mathcal{N}(IHa(N)\uplus \left\langle
\Lambda,\Pi\right\rangle)=\left[\frac{N}{2}\right]+3$ for
$\Lambda,\Pi\in\left\{A,M,L\right\}$. The operators generating the
virtual copies are given in Table 1.

\begin{table}
\caption{Generators of virtual copies of $\frak{so}(n)$ in the
enveloping algebras of one and two dimensional central extensions
of $IHa(N)$.}
\begin{tabular}
[c]{ll}%
$\frak{g}$ & Operators $\widehat{J}_{ij}$ generating the virtual
copy of $\frak{so}\left( N\right)  $ in $\frak{U}\left(
\frak{g}\right) $\\\br
$IHa\left(  n\right)  \uplus\left\langle L\right\rangle $ & $%
\begin{array}
[c]{l}%
 J_{ij}\left(  T^{2}+RL\right)  +T\left(  G_{i}Q_{j}%
-G_{j}Q_{i}+F_{i}P_{j}-F_{j}P_{i}\right)  +L\left(  G_{i}F_{j}-G_{j}%
F_{i}\right)  \\
 +R\left(  P_{i}Q_{j}-P_{j}Q_{i}\right)
\end{array}
\ $\\
$IHa\left(  n\right)  \uplus\left\langle M\right\rangle $ & $%
\begin{array}
[c]{l}%
 J_{ij}T^{2}+T\left(  G_{i}Q_{j}-G_{j}Q_{i}+F_{i}%
P_{j}-F_{j}P_{i}\right)  +M\left(  Q_{i}F_{j}-Q_{j}F_{i}\right)  \\
 +R\left(  P_{i}Q_{j}-P_{j}Q_{i}\right)
\end{array}
$\\
$IHa\left(  n\right)  \uplus\left\langle A\right\rangle $ & $%
\begin{array}
[c]{l}%
 J_{ij}T^{2}+T\left(  G_{i}Q_{j}-G_{j}Q_{i}+F_{i}%
P_{j}-F_{j}P_{i}\right)  +A\left(  P_{i}G_{j}-P_{j}G_{i}\right)  \\
 +R\left(  P_{i}Q_{j}-P_{j}Q_{i}\right)
\end{array}
$\\
$IHa\left(  n\right)  \uplus\left\langle A,M\right\rangle $ & $%
\begin{array}
[c]{l}%
 J_{ij}\left(  T^{2}-AM\right)  +T\left(  G_{i}Q_{j}%
-G_{j}Q_{i}+F_{i}P_{j}-F_{j}P_{i}\right)  +R\left(  P_{i}Q_{j}-P_{j}%
Q_{i}\right)  \\
 +M\left(  Q_{i}F_{j}-Q_{j}F_{i}\right)  +A\left(  P_{i}G_{j}-P_{j}%
G_{i}\right)
\end{array}
$\\
$IHa\left(  n\right)  \uplus\left\langle A,L\right\rangle $ & $%
\begin{array}
[c]{l}%
 J_{ij}\left(  T^{2}+RL\right)  +T\left(  G_{i}Q_{j}%
-G_{j}Q_{i}+F_{i}P_{j}-F_{j}P_{i}\right)  +L\left(  G_{i}F_{j}-G_{j}%
F_{i}\right)  \\
 +R\left(  P_{i}Q_{j}-P_{j}Q_{i}\right)  +A\left(  P_{i}G_{j}-P_{j}%
G_{i}\right)
\end{array}
$\\
$IHa\left(  n\right)  \uplus\left\langle L,M\right\rangle $ & $%
\begin{array}
[c]{l}%
 J_{ij}\left(  T^{2}+RL\right)  +T\left(  G_{i}Q_{j}%
-G_{j}Q_{i}+F_{i}P_{j}-F_{j}P_{i}\right)  +L\left(  G_{i}F_{j}-G_{j}%
F_{i}\right)  \\
 +R\left(  P_{i}Q_{j}-P_{j}Q_{i}\right)  +M\left(  Q_{i}F_{j}-Q_{j}%
F_{i}\right)
\end{array}
$\\\br
\end{tabular}
\end{table}

\section{Applications to contractions}

Considering that the proposed method is motivated by the problem
of finding the invariants of semidirect products, it is natural to
ask whether the procedure of virtual copies can be extended to
contractions of Lie algebras, in order to obtain a systematic
method to compute the invariants of the contraction without being
forced to consider the invariants separately \cite{Boh,We}. In
this section we outline a possible approach to the contraction
problem. Recall that a contraction of a Lie algebra $\frak{g}$ is
determined by a family of non-singular linear maps
$\Phi_{\varepsilon}$ of $\frak{g}$, where $\varepsilon \in (0,1]$.
If the limit
\begin{equation}
\left[X,Y\right]_{\infty}:=\lim_{\varepsilon\rightarrow
0}\Phi_{\varepsilon}^{-1}\left[\Phi_{\varepsilon}(X),\Phi_{\varepsilon}(Y)\right]
\label{Kont}
\end{equation}
exists for any $X,Y\in\frak{g}$, then equation (\ref{Kont})
defines a Lie algebra $\frak{g}^{\prime}$ called the contraction
of $\frak{g}$ (by $\Phi_{\varepsilon}$) \cite{IW}.

\smallskip

Suppose that for the semidirect product $\frak{g}=\frak{s}%
\overrightarrow{\oplus}_{R}\frak{r}$ \ we can find a virtual copy
of $\frak{s}$ in the enveloping algebra $\mathcal{U}\left(
\frak{g}\right)  $ generated by the operators
\[
X_{i}^{\prime}=X_{i}f\left(  Y_{1},..,Y_{m}\right)  +P_{i}\left(
Y_{1},..,Y_{m}\right)
\]
satisfying the requirements of (\ref{OP1}). Now let
$\frak{g}\rightsquigarrow
\frak{g}^{\prime}=\frak{s}\overrightarrow{\oplus}_{R}\frak{r}^{\prime}$
be a nontrivial generalized In\"{o}n\"{u}-Wigner contraction such
that $\mathcal{N}\left(  \frak{g}\right)  =\mathcal{N}\left(
\frak{g}^{\prime }\right)  $, given by transformations of the type
\cite{We}:
\begin{eqnarray}
\Phi_{\varepsilon}(X_{i})=X_{i},\nonumber\\
\Phi_{\varepsilon}(Y_{j})=\varepsilon^{n_{i}}Y_{j},\quad n_{i}%
\in\mathbb{Z},\label{Kont1}
\end{eqnarray}
where $\left\{  X_{1},..,X_{n},Y_{1},..Y_{m}\right\}  $ is a basis
of $\frak{g}$. This means that the contraction is performed only
in the radical, preserving the Levi part $\frak{s}$ and the
representation $R$. Rewriting the polynomials $f$ and $P_{i}$ over
the transformed basis, we obtain the expressions
\begin{eqnarray}
f(\Phi_{\varepsilon}(Y_{1}),..,\Phi_{\varepsilon}(Y_{m}))
=\varepsilon
^{-\left(  n_{i_{1}}+...+n_{i_{k-1}}\right)  }\alpha^{i_{1}...i_{k-1}}%
Y_{i_{1}}...Y_{i_{k-1}},\nonumber\\
P_{i}(\Phi_{\varepsilon}(Y_{1}),..,\Phi_{\varepsilon}(Y_{m}))
=\varepsilon^{-\left(  n_{j_{1}}+...+n_{j_{k-1}}\right)  }\beta_{i}%
^{j_{1}...j_{k}}Y_{j_{1}}...Y_{j_{k}}.\label{KontOp}
\end{eqnarray}
Considering the maximal exponents of the contraction parameter
$\varepsilon$ defined by
\begin{eqnarray}
M_{0}   =\max\left\{  n_{i_{1}}+...+n_{i_{k-1}}\quad|\quad\alpha
^{i_{1}..i_{k-1}}\neq0\right\}  ,\nonumber\\
M_{i}   =\max\left\{  n_{i_{1}}+...+n_{i_{k}}\quad|\quad\beta_{i}%
^{i_{1}..i_{k-1}}\neq0\right\}  ,
\end{eqnarray}
we can compute the following limits for $i=1,..,n$:
\begin{eqnarray}
\fl f_{0}(Y_{1},..,Y_{m})  &
=\lim_{\varepsilon\rightarrow0}\varepsilon
^{M_{0}}f(\Phi_{\varepsilon}(Y_{1}),..,\Phi_{\varepsilon}(Y_{m}))=\sum
_{n_{i_{1}}+...+n_{i_{p}}=M_{0}}\alpha^{i_{1}...i_{k-1}}Y_{i_{1}}%
...Y_{i_{k-1}},\nonumber\\
\fl P_{i,0}(Y_{1},..,Y_{m})  &
=\lim_{\varepsilon\rightarrow0}\varepsilon
^{M_{i}}P_{i}(\Phi_{\varepsilon}(Y_{1}),..,\Phi_{\varepsilon}(Y_{m}%
))=\sum_{n_{i_{1}}+...+n_{i_{p}}=M_{i}}\beta_{i}^{j_{1}...j_{k}}Y_{j_{1}%
}...Y_{j_{k}}.\label{Lim}
\end{eqnarray}
We further introduce the scalars
\[
N_{i}=\max\left\{  M_{0},M_{i}\right\},\quad i=1,..,n.
\]
This enables us to compute the following limits for arbitrary
index $i$
\begin{equation}
X_{i}^{\prime\prime}=\lim_{\varepsilon\rightarrow0}\varepsilon^{N_{i}}\left\{
X_{i}f(\Phi_{\varepsilon}(Y_{1}),..,\Phi_{\varepsilon}(Y_{m}))+P_{i}%
(\Phi_{\varepsilon}(Y_{1}),..,\Phi_{\varepsilon}(Y_{m}))\right\} .\label{GW1}%
\end{equation}
Observe that if $M_{0}>M_{i_{0}}$ for some $i_{0}$, then the
previous limit reduces to
\[
X_{i_{0}}^{\prime\prime}=X_{i}f_{0}(Y_{1},..,Y_{m}),
\]
while for $M_{0}<M_{i_{0}}$, equation (\ref{GW1}) simplifies to
\[
X_{i_{0}}^{\prime\prime}=P_{i_{0},0}(Y_{1},..,Y_{m}).
\]
In any case, it is obvious that $f_{0}$ is a Casimir operator of
the contraction $\frak{g}^{\prime}$, depending only on the
variables of the radical $\frak{r}^{\prime}$.

\begin{proposition}
A generalized In\"on\"u-Wigner contraction
$\frak{g}\rightsquigarrow \frak{g}^{\prime}$ of type (\ref{Kont1})
determines a virtual copy of $\frak{s}$ in the enveloping algebra
of $\frak{g}^{\prime}$ only if
\[
M_{0}=M_{1}=..=M_{n}.
\]
\end{proposition}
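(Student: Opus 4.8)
The plan is to show that the contracted operators $X_i''$ defined in (\ref{GW1}) generate a virtual copy of $\frak{s}$ in $\mathcal{U}(\frak{g}')$ precisely when they transform, after contraction, in the manner prescribed by (\ref{Bed1}) and (\ref{Bed2}) with respect to the contracted brackets $[\cdot,\cdot]_\infty$, and then to identify the obstruction to this as the equality of the exponents $M_0,M_1,\dots,M_n$. First I would observe that, by construction, each $X_i''$ is a polynomial in the generators $Y_j$ of the contracted radical $\frak{r}'$ together with $X_i$, obtained as the $\varepsilon\to 0$ limit of $\varepsilon^{N_i}X_i'$ rewritten over the transformed basis; since $N_i=\max\{M_0,M_i\}$, the limit is nontrivial and finite. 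The trichotomy already recorded in the excerpt ($M_0>M_{i_0}$ gives $X_{i_0}''=X_{i_0}f_0$, $M_0<M_{i_0}$ gives $X_{i_0}''=P_{i_0,0}$, and $M_0=M_{i_0}$ gives $X_{i_0}''=X_{i_0}f_0+P_{i_0,0}$) is the heart of the matter: only in the last case does $X_{i_0}''$ retain the structural form (\ref{OP1}) needed for a virtual copy.

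Next I would examine the commutator $[X_i'',X_j'']$ computed with the contracted bracket. Since contraction is a limit of Lie algebra isomorphisms applied to $\frak{g}$, and the original operators satisfy $[X_i',X_j']=f[X_i,X_j]'$ (equivalently (\ref{Bed1}), (\ref{Bed2})), one has $\Phi_\varepsilon^{-1}[\Phi_\varepsilon(X_i'),\Phi_\varepsilon(X_j')]=\varepsilon^{-(M_i+M_j)}\cdot(\text{expression in }f,P_k)$ up to the rescalings. Multiplying through by the appropriate power of $\varepsilon$ and passing to the limit, the bracket $[X_i'',X_j'']_\infty$ will reproduce $f_0\,[X_i,X_j]'$ — i.e. $C_{ij}^k(X_k f_0 + P_{k,0})$ — exactly when the leading powers of $\varepsilon$ in every term of $X_k f + P_k$ for $k$ appearing on the right-hand side survive simultaneously, which forces $M_0=M_k$ for those $k$. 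If instead $M_0\neq M_k$ for some index $k$ occurring with nonzero structure constant, the surviving limit on the right is strictly smaller (only the dominant piece, either $X_k f_0$ or $P_{k,0}$, contributes) and fails to equal $f_0[X_i,X_j]'$; moreover the would-be factor $f_0$ is then not uniform across the generators, so the normalized operators $X_i''/f_0$ do not close on a copy of $\frak{s}$ in the fraction field of $\mathcal{U}(\frak{g}')$. Hence the condition $M_0=M_1=\cdots=M_n$ is necessary.

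The key intermediate step, and the one I expect to be the main obstacle, is bookkeeping the dominant-exponent behaviour of the bracket $[X_i',X_j']$ under the rescaling $\Phi_\varepsilon$ and verifying that no cancellation among subleading terms rescues the limit when the $M_i$ differ; this requires knowing that the leading term $C_{ij}^k X_k f^2 + C_{ij}^k P_k f$ in $[X_i',X_j']$ (cf. the displayed commutator just before Theorem 1) does not have its top-degree part in $Y$ annihilated in the limit, which follows because $f_0\neq 0$ and $\frak{r}'$ is a genuine Lie algebra so $[X_i,X_j]'$ is not identically zero in $\mathcal{U}(\frak{g}')$. Once this is in place, the statement "only if $M_0=M_1=\cdots=M_n$" is immediate: a virtual copy of $\frak{s}$ in $\mathcal{U}(\frak{g}')$ requires operators of the form (\ref{OP1}) satisfying (\ref{Bed1}) and (\ref{Bed2}) for $\frak{g}'$, and the contracted operators can only have this form uniformly — with a single common Casimir factor $f_0$ of $\frak{r}'$ — when every $X_i''$ equals $X_i f_0 + P_{i,0}$, which by the trichotomy happens exactly when all $M_i$ coincide with $M_0$.
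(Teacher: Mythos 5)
Your trichotomy for the contracted operators is the right organizing principle, and your treatment of the case $M_{i_0}>M_{0}$ --- where $X_{i_0}''=P_{i_0,0}$ has no $X_{i_0}$ component, so that the bracket of two contracted operators with $\left[X_{k_1},X_{k_2}\right]=X_{i_0}$ produces a term $X_{i_0}f_0^2$ that cannot be matched by $f_0\,X_{i_0}''$ --- is essentially the paper's argument for that case. The gap is in the opposite case $M_0>M_{i_0}$, which you try to dispose of by the same bracket-closure mechanism. There the contracted operator is $X_{i_0}''=X_{i_0}f_0$, and bracket closure does \emph{not} fail: in the extreme situation where $M_i<M_0$ for every $i$, all contracted operators become $X_i''=X_if_0$, and since $f_0$ is a Casimir operator of $\frak{g}'$ depending only on radical variables, one gets $\left[X_i'',X_j\right]=C_{ij}^kX_kf_0=C_{ij}^kX_k''$ and $\left[X_i'',X_j''\right]=C_{ij}^kX_kf_0^2=f_0\left[X_i,X_j\right]''$, so conditions of type (\ref{Bed2}) are satisfied. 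Your remark that only the balanced case ``retains the structural form (\ref{OP1})'' does not close this gap either, because $X_{i_0}f_0$ is precisely of the form (\ref{OP1}) with $P_{i_0}=0$.

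The missing ingredient is condition (\ref{Bed1}). Because the contraction (\ref{Kont1}) fixes the Levi part and the representation $R$, there exists a generator $Y_{k_0}$ of $\frak{r}'$ with $\left[X_{i_0},Y_{k_0}\right]\neq 0$, whence $\left[X_{i_0}'',Y_{k_0}\right]=\left[X_{i_0},Y_{k_0}\right]f_0\neq 0$: the operator $X_{i_0}''$ fails to commute with the radical and so cannot belong to a virtual copy. Equivalently, with $P_{i_0}=0$ the second equation of (\ref{Eq1A}) forces $\left[X_{i_0},Y_j\right]f=0$, contradicting the nontriviality of the preserved representation. This is exactly how the paper handles that branch, and your proof needs it, since the bracket computation alone cannot detect the failure. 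A secondary point: the $\varepsilon$-bookkeeping in your second paragraph (the exponent $\varepsilon^{-(M_i+M_j)}$ and the assertion that no cancellation among subleading terms rescues the limit) is stated rather than established; once the two degenerate cases are separated and handled as above, that bookkeeping is not actually needed for the ``only if'' direction.
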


The proof is essentially an adaptation of well known facts on
contractions of Casimir operators \cite{Fra}. Suppose that for
some index $i_{0}\in\left\{  1,..,n\right\}  $ we have
$N_{i_{0}}=M_{0}$. As commented above, the contracted operator is $X_{i_{0}%
}^{\prime\prime}=X_{i_{0}}f_{0}\neq0$. Since the representation
$R$ of $\frak{s}$ is preserved by the contraction, there exists at
least an element $Y_{k_{0}}\in\frak{r}^{\prime}$ such that $\left[
X_{i_{0}},Y_{k_{0}}\right] \neq0$. But this implies that
\begin{equation*}
\left[  X_{i_{0}}^{\prime\prime},Y_{k_{0}}\right]  =\left[  X_{i_{0}}%
f_{0},Y_{k_{0}}\right]  =\left[  X_{i_{0}},Y_{k_{0}}\right]
f_{0}\neq0,
\end{equation*}
contradicting condition (\ref{Bed1}). Thus the $\left\{
X_{i}^{\prime\prime}\right\}  $ cannot generate a virtual copy of $\frak{s}$ in $\mathcal{U}%
\left(  \frak{g}\right)  $. If on the contrary,
$N_{i_{0}}=M_{i_{0}}$, then
the contraction of the operator $X_{i_{0}}^{\prime}$  equals $X_{i_{0}%
}^{\prime\prime}=P_{i_{0},0}$, which depends only on the variables
of
$\frak{r}^{\prime}$. Now $\frak{s}$ is semisimple, thus we can find $X_{k_{1}%
},X_{k_{2}}\in\frak{s}$ such that $\left[
X_{k_{1}},X_{k_{2}}\right] =X_{i_{0}}$. In this case we would
obtain
\begin{equation*}
\left[  X_{k_{1}}^{\prime\prime},X_{k_{2}}^{\prime\prime}\right]  =X_{i_{0}%
}f_{0}^{2}+...\neq X_{i_{0}}^{\prime\prime},
\end{equation*}
which again prevents the operators $\left\{
X_{i}^{\prime\prime}\right\}  $ to generate a virtual copy of
$\frak{s}$. Observe that this is precisely what happens if the
contraction is an inhomogeneous algebra, i.e., $\frak{r}^{\prime}$
is an Abelian algebra.

Let us therefore suppose that all indices $N_{i}$ coincide, i.e.,
that $M_{0}=M_{i}$ for all $i$. Let $f_{1}=\sum_{n_{i_{1}}+...+n_{i_{p}}<M_{0}}\alpha^{i_{1}...i_{k-1}}Y_{i_{1}%
}...Y_{i_{k-1}}$ and $P_{i,1}=\sum_{n_{i_{1}}+...+n_{i_{p}}<M_{i}}\beta_{i}%
^{j_{1}...j_{k}}Y_{j_{1}}...Y_{j_{k}}$. For any $i$ we can
decompose $\varepsilon^{M_{0}}X_{i}^{\prime}$ with respect to the
parameter $\varepsilon$ as
\begin{equation}
\begin{array}
[c]{rl}%
\varepsilon^{M_{0}}X_{i}^{\prime}= & \varepsilon^{M_{0}}\left\{  X_{i}%
f(\Phi_{\varepsilon}(Y_{1}),..,\Phi_{\varepsilon}(Y_{m}))+P_{i}(\Phi
_{\varepsilon}(Y_{1}),..,\Phi_{\varepsilon}(Y_{m}))\right\}  \\
= & X_{i}\left(  f_{0}+\sum_{p\geq1}\varepsilon^{M_{0}-p}f_{1}\right)  +\left( P_{i,0}%
+\sum_{q\geq 1}\varepsilon^{M_{0}-q}P_{i,1}\right).
\end{array}
\end{equation}
Now, over the transformed basis the condition (\ref{Bed1}) must
hold for any $\varepsilon\neq0$. Developing the commutator $\left[
\varepsilon^{M_{0}}X_{i}^{\prime},Y_{k}\right]$ and reordering the
result with respect to $\varepsilon$ leads to to the expression
\begin{equation*}
\fl
\begin{array}
[c]{rl}%
\left[  \varepsilon^{M_{0}}X_{i}^{\prime},Y_{k}\right]  = & \left[
X_{i}\left(  f_{0}+\sum_{p\geq1}\varepsilon^{M_{0}-p}f_{1}\right)  +\left( P_{i,0}%
+\sum_{q\geq 1}\varepsilon^{M_{0}-q}P_{i,1}\right)  ,Y_{k}\right]  \\
= & \left[  X_{i}^{\prime\prime},Y_{k}\right]
+\sum_{p\geq1}\varepsilon^{M_{0}-p}\left[
X_{i}\,f_{1},Y_{k}\right]
+\sum_{q\geq1}\varepsilon^{M_{0}-q}\left[ P_{i,1},Y_{k}\right] =0.
\end{array}
\end{equation*}
Therefore, for the limit $\varepsilon\rightarrow 0$ we finally get
\[
\lim_{\varepsilon\rightarrow0}\left[  \varepsilon^{M_{0}}X_{i}^{\prime}%
,Y_{k}\right]  =\left[  X_{i}^{\prime\prime},Y_{k}\right]  =0,
\]
which shows that the contracted operators  $\left\{
X_{i}^{\prime\prime }\right\}  $ satisfy constraint (\ref{Bed1}).
Arguing in similar manner, it is shown that constraint
(\ref{Bed2}) is also satisfied. We conclude that the $\left\{
X_{i}^{\prime\prime }\right\}  $ generate a virtual copy of
$\frak{s}$ in the enveloping algebra of the contraction
$\frak{g}^{\prime}$. In the particular case of
$\mathcal{N}(\frak{g})=\mathcal{N}(\frak{g}^{\prime})=\mathcal{N}(\frak{s})+1$,
a fundamental set of invariants is completely determined by the
corresponding virtual copies and the Casimir operators $f$ and
$f_{0}$, respectively.

\bigskip
As an example to illustrate the contraction of virtual copies, let
us consider four boson pairs $\left\{
a_{i},a_{i}^{\dagger}\right\}  $ with well-known relations%
\[
\left[  a_{i},a_{j}^{\dagger}\right]  =\delta_{ij},\;\left[  a_{i}%
,a_{j}\right]  =\left[  a_{i}^{\dagger},a_{j}^{\dagger}\right]
=0.
\]
Let $\frak{g}$ be the Lie algebra generated by the operators
\begin{equation*}
\fl
\begin{array}
[c]{llll}%
X_{1,1}=a_{1}^{\dagger}a_{1}+a_{2}^{\dagger}a_{2}, & X_{-1,1}=a_{1}a_{1}%
+a_{2}a_{2}, & X_{1,-1}=a_{1}^{\dagger}a_{1}^{\dagger}+a_{2}^{\dagger}%
a_{2}^{\dagger}, & G_{1}=a_{1}a_{4}+a_{2}a_{3},\\
F_{1}=a_{2}^{\dagger}a_{3}+a_{1}^{\dagger}a_{4}, & Q_{1}=a_{2}a_{4}+a_{1}%
a_{3}, & P_{1}=a_{1}^{\dagger}a_{3}+a_{2}^{\dagger}a_{4} & R=a_{4}a_{4}%
+a_{3}a_{3},\\
E=a_{3}^{\dagger}a_{4}+a_{4}^{\dagger}a_{3}, & T=2a_{3}a_{4}. &  &
\end{array}
\end{equation*}
It is straightforward to verify that the subalgebra generated by
$\left\{ X_{1,1},X_{-1,1},X_{1,-1},F_{1},G_{1},R\right\}  $ is
isomorphic to the semidirect sum $\frak{wsp}\left(
1,\mathbb{R}\right)  $, which incidentally is isomorphic to the
2-photon algebra \cite{Que,Zha}. We can therefore consider
$\frak{g}$ as a kind of ``inhomogenization'' of $\frak{wsp}\left(
1,\mathbb{R}\right)  $, in analogy with the Hamilton algebras.
$\frak{g}$ clearly satisfies $\mathcal{N}\left(  \frak{g}\right)
=2$, and although the centre is zero,\footnote{The commutators of
$\frak{g}$ and the contraction $\frak{g}^{\prime}$ are given in
Appendix B.} the polynomial $R^{2}-T^{2}$ is a Casimir operator of
$\frak{g}$. It can be easily verified that a virtual copy of the
Levi subalgebra $\frak{su}\left(  1,1\right)  $ in the enveloping
algebra of $\frak{g}$ is generated by the operators
\begin{eqnarray*}
X_{1,1}^{\prime}   &=X_{1,1}\left(  R^{2}-T^{2}\right)  +T\left(  Q_{1}%
F_{1}+G_{1}P_{1}\right)  -R\left(  G_{1}F_{1}+Q_{1}P_{1}\right),  \\
X_{-1,1}^{\prime}   &=X_{-1,1}\left(  R^{2}-T^{2}\right)  +2TG_{1}Q_{1}%
-RG_{1}^{2}-RQ_{1}^{2},\\
X_{1,-1}^{\prime}   &=X_{1,-1}\left(  R^{2}-T^{2}\right)  +2TF_{1}P_{1}%
-RF_{1}^{2}-RP_{1}^{2}.
\end{eqnarray*}
In particular, the second Casimir operator of $\frak{g}$ is easily
obtained as
\begin{equation*}
C=\left(  X_{1,1}^{\prime}\right)  ^{2}-\frac{1}{2}\left(
X_{-1,1}^{\prime
}X_{1,-1}^{\prime}+X_{1,-1}^{\prime}X_{-1,1}^{\prime}\right)
\end{equation*}
If we now consider the contraction
$\frak{g}\rightsquigarrow\frak{g}^{\prime}$ determined by the
transformations
\begin{equation*}
\Phi_{\varepsilon}\left(  Q_{1}\right)  =\varepsilon Q_{1},\,\Phi
_{\varepsilon}\left(  P_{1}\right)  =\varepsilon
P_{1},\;\Phi_{\varepsilon }\left(  E\right)  =\varepsilon
E,\,\Phi_{\varepsilon}\left(  T\right) =\varepsilon T,
\end{equation*}
where the remaining generators remain unchanged, we obtain an
algebra $\frak{g}^{\prime}$ with two invariants that also extends
the 2-photon algebra. Expressing the operators generating the
virtual copy in $\mathcal{U}\left(  \frak{g}\right)  $ over the
transformed basis, and considering the previous limits
(\ref{GW1}), we obtain
\begin{eqnarray*}
\varepsilon^{2}X_{1,1}^{\prime}   &=X_{1,1}\left(
\varepsilon^{2}R^{2}-T^{2}\right)
+T\left(  Q_{1}F_{1}+G_{1}P_{1}\right)  -R\left(  \varepsilon^{2}G_{1}%
F_{1}+Q_{1}P_{1}\right),  \\
\varepsilon^{2}X_{-1,1}^{\prime}   &=X_{-1,1}\left(
\varepsilon^{2}R^{2}-T^{2}\right)
+2TG_{1}Q_{1}-R\varepsilon^{2}G_{1}^{2}-RQ_{1}^{2},\\
\varepsilon^{2}X_{1,-1}^{\prime}  & =X_{1,-1}\left(
\varepsilon^{2}R^{2}-T^{2}\right)
+2TF_{1}P_{1}-R\varepsilon^{2}F_{1}^{2}-RP_{1}^{2}.
\end{eqnarray*}
The contracted operators $X_{i,j}^{\prime\prime}$ are therefore
\begin{eqnarray*}
X_{1,1}^{\prime\prime}  & =X_{1,1}\left(  -T^{2}\right)  +T\left(  Q_{1}F_{1}%
+G_{1}P_{1}\right)  -RQ_{1}P_{1},\\
X_{-1,1}^{\prime\prime} &  =X_{-1,1}\left(  -T^{2}\right)  +2TG_{1}Q_{1}-RQ_{1}%
^{2},\\
X_{1,-1}^{\prime\prime} &  =X_{1,-1}\left(  -T^{2}\right)  +2TF_{1}P_{1}%
-RP_{1}^{2}.
\end{eqnarray*}
It is straightforward to verify that these operators generate a
copy of $\frak{su}(1,1)$ in the enveloping algebra of the
contraction $\frak{g}^{\prime}$. In particular, the second Casimir
operator follows from the expression $C^{\prime}=\left(
X_{1,1}^{\prime\prime}\right)  ^{2}-\frac{1}{2}\left(
X_{-1,1}^{\prime\prime}X_{1,-1}^{\prime\prime}+X_{1,-1}^{\prime\prime}X_{-1,1}^{\prime\prime}\right)$,
and coincides with the contraction of the Casimir operator $C$ of
$\frak{g}$.

\section{Final remarks}

We have seen that modifications of the procedure to compute the
Casimir operators of semidirect products
$\frak{s}\overrightarrow{\oplus}_{R}\frak{w}(n)$ using quadratic
operators in the enveloping algebra can be extended naturally to
more general types of Lie algebras, including semidirect products
with pure solvable radicals. This enables us to derive the Casimir
operators of such Lie algebras using the classical formulae for
the invariants of semisimple algebras. The conditions (\ref{Bed1})
and (\ref{Bed2}) also determine the range of validity of the
ansatz based on operators of the form (\ref{OP1}). First of all,
the semidirect product
$\frak{g}=\frak{s}\overrightarrow{\oplus}_{R}\frak{r}$ must
possess at least $\mathcal{N}(\frak{s})+1$ Casimir operators, one
of them depending only on the generators of $\frak{r}$ and
$\mathcal{N}(\frak{s})$ depending on all generators of $\frak{g}$.
Further, the radical $\frak{r}$ of $\frak{g}$ cannot be Abelian.
This implies that the procedure is not directly adaptable to
classical inhomogeneous algebras, at least with the proposed type
of operators. In spite of this apparent restrictions, the class of
Lie algebras to which the method of virtual copies can be applied
is ample. It covers the central extensions of semidirect products
of the type $\frak{s}\overrightarrow{\oplus}_{R}\frak{w}(n)$
(whenever they exist), as well as other ``inhomogenizations"
containing the latter as subalgebras. In particular, it comprises
the one dimensional non-central extensions of double inhomogeneous
Lie algebras \cite{Her}.

\smallskip

We have also shown that for generalized In\"on\"u-Wigner
contractions $\frak{g}\rightsquigarrow
\frak{g}^{\prime}=\frak{s}\overrightarrow{\oplus}_{R}\frak{r}^{\prime}$
that preserve the Levi subalgebra $\frak{s}$ and the
representation $R$, the procedure can be applied to construct a
virtual copy of $\frak{s}$ in the enveloping algebra of
$\frak{g}^{\prime}$. In some sense, this result can be interpreted
as a kind of contraction of realizations of semisimple Lie
algebras. In particular, from this we easily deduce the Casimir
operators of the contraction in compact form.  This leads to the
question whether the ansatz can be reversed, i.e., under which
conditions the realization of a simple Lie algebra $\frak{s}$ in
the enveloping algebra of a semidirect product $\frak{g}^{\prime}$
can be deformed into a realization in another enveloping algebra
$\mathcal{U}(\frak{g})$, and whether the underlying Lie algebras
$\frak{g}$ and $\frak{g}^{\prime}$ are related by contraction
\cite{Fra}. This problem is deeply related with the stability
problem of semidirect products of Lie algebras, a question that
remains largely unsolved.

\section*{Acknowledgment}
During the preparation of this work, the first author (RCS) was
financially supported by the research project MTM2006-09152 of the
M.E.C. and the project and CCG07-UCM/ESP-2922 of the U.C.M.-C.A.M.

\section*{References}

\newpage

\section*{Appendix A}

In this appendix we specify the commutators of the basis elements
of $\frak{g}$ with higher order polynomials in the generators. The
relations are presented simultaneously for the inhomogeneous
Hamilton algebra $IHa\left( N\right)  $ and its various central
extensions. To obtain the corresponding commutators for e.g.
$IHa\left(  N\right)  $, all terms containing the generators
$M,A,L$ in the following list are skipped. Analogous procedure
holds for the remaining extensions.

\begin{equation*}
\fl
\begin{tabular*}{\textwidth}{@{}lll*{16}{@{\extracolsep{0pt plus
11pt}}l}}
\multicolumn{2}{l}{$\left[  J_{ij},G_{k}Q_{l}\right]  =\delta_{j}^{k}%
G_{i}Q_{l}-\delta_{i}^{k}G_{j}Q_{l}+\delta_{j}^{l}G_{k}Q_{i}-\delta_{l}%
^{i}G_{k}Q_{j}$,} & $\left[  G_{i}Q_{j},G_{k}\right]  =-\delta_{j}^{k}G_{i}%
T,$\\
\multicolumn{2}{l}{$\left[  J_{ij},F_{k}P_{l}\right]  =\delta_{j}^{k}%
F_{i}P_{l}-\delta_{i}^{k}F_{j}P_{l}+\delta_{j}^{l}F_{k}P_{i}-\delta_{l}%
^{i}F_{k}P_{j},$} & $\left[  G_{i}Q_{j},F_{k}\right]  =-\delta_{j}^{k}%
G_{i}A+\delta_{i}^{k}Q_{j}R,$\\
\multicolumn{2}{l}{$\left[  J_{ij},G_{k}F_{l}\right]  =\delta_{j}^{k}%
G_{i}F_{l}-\delta_{i}^{k}G_{j}F_{l}+\delta_{j}^{l}G_{k}F_{i}-\delta_{l}%
^{i}G_{k}F_{j}$,} & $\left[  G_{i}Q_{j},Q_{k}\right]  =\delta_{k}^{i}Q_{j}%
T,$\\
\multicolumn{2}{l}{$\left[  J_{ij},P_{k}Q_{l}\right]  =\delta_{j}^{k}%
P_{i}Q_{l}-\delta_{i}^{k}P_{j}Q_{l}+\delta_{j}^{l}P_{k}Q_{i}-\delta_{l}%
^{i}P_{k}Q_{j}$,} & $\left[  G_{i}Q_{j},P_{k}\right]  =\delta_{i}^{k}%
Q_{j}M-\delta_{k}^{j}G_{i}L,$\\
\multicolumn{2}{l}{$\left[  J_{ij},Q_{k}F_{l}\right]  =\delta_{j}^{k}%
Q_{i}F_{l}-\delta_{i}^{k}Q_{j}F_{l}+\delta_{j}^{l}Q_{k}F_{i}-\delta_{l}%
^{i}Q_{k}F_{j}$,} & $\left[  G_{i}Q_{j},E\right]  =P_{i}Q_{j},$\\
\multicolumn{2}{l}{$\left[  J_{ij},P_{k}G_{l}\right]  =\delta_{j}^{k}%
P_{i}G_{l}-\delta_{i}^{k}P_{j}G_{l}+\delta_{j}^{l}P_{k}G_{i}-\delta_{l}%
^{i}P_{k}G_{j}$,} & $\left[  F_{i}P_{j},G_{k}\right]  =-\delta_{j}^{k}%
F_{i}M-\delta_{i}^{k}P_{j}R,$\\
\multicolumn{2}{l}{$\left[  G_{i}Q_{j},G_{k}Q_{l}\right]  =\delta_{i}%
^{l}TG_{k}Q_{j}-\delta_{k}^{j}TG_{i}Q_{l}$,} & $\left[  F_{i}P_{j}%
,F_{k}\right]  =-\delta_{k}^{j}F_{i}T,$\\
\multicolumn{2}{l}{$\left[  G_{i}Q_{j},F_{k}P_{l}\right]  =\delta_{i}%
^{k}RQ_{j}P_{l}-\delta_{l}^{j}LG_{i}F_{k}+\delta_{i}^{l}MF_{k}Q_{j}+\delta
_{k}^{i}\delta_{j}^{l}RL$,} & $\left[  F_{i}P_{j},Q_{k}\right]
=\delta
_{i}^{k}P_{j}A+\delta_{k}^{j}F_{i}L,$\\
\multicolumn{2}{l}{$\left[  G_{i}Q_{j},G_{k}F_{l}\right]  =\delta_{i}%
^{l}RG_{k}Q_{j}-\delta_{j}^{k}TG_{i}F_{l}-\delta_{l}^{j}AG_{i}G_{k}$,}
&
$\left[  F_{i}P_{j},P_{k}\right]  =\delta_{i}^{k}P_{j}T,$\\
\multicolumn{2}{l}{$\left[  G_{i}Q_{j},P_{k}Q_{l}\right]  =\delta_{i}%
^{l}TP_{k}Q_{j}-\delta_{j}^{k}LG_{i}Q_{l}+\delta_{i}^{k}MQ_{j}Q_{l}$,}
&
$\left[  F_{i}P_{j},E\right]  =-Q_{i}P_{j},$\\
\multicolumn{2}{l}{$\left[  G_{i}Q_{j},Q_{k}F_{l}\right]  =\delta_{i}%
^{k}TQ_{j}F_{l}-\delta_{j}^{l}AG_{i}Q_{k}+\delta_{i}^{k}\delta_{j}^{l}TA$,}
&
$\left[  G_{i}F_{j},G_{k}\right]  =-\delta_{k}^{j}G_{i}R,$\\
\multicolumn{2}{l}{$\left[  G_{i}Q_{j},P_{k}G_{l}\right]  =\delta_{i}%
^{k}MQ_{j}G_{l}-\delta_{j}^{k}LG_{i}G_{l}-\delta_{l}^{j}TG_{i}P_{k}+\delta
_{k}^{i}\delta_{l}^{j}MT$,} & $\left[  G_{i}F_{j},F_{k}\right]
=\delta
_{i}^{k}F_{j}R,$\\
\multicolumn{2}{l}{$\left[  F_{i}P_{j},F_{k}P_{l}\right]  =\delta_{i}%
^{l}TF_{k}P_{j}-\delta_{k}^{j}TF_{i}P_{l}$,} & $\left[  G_{i}F_{j}%
,Q_{k}\right]  =\delta_{i}^{k}F_{j}T+\delta_{k}^{j}G_{i}A,$\\
\multicolumn{2}{l}{$\left[  F_{i}P_{j},G_{k}F_{l}\right]  =-\delta_{j}%
^{l}TF_{i}G_{j}-\delta_{j}^{k}MF_{i}F_{l}-\delta_{k}^{j}RP_{j}F_{k}-\delta
_{k}^{i}\delta_{l}^{j}RT$,} & $\left[  G_{i}F_{j},P_{k}\right]
=\delta
_{j}^{k}G_{i}T+\delta_{k}^{i}F_{j}M,$\\
\multicolumn{2}{l}{$\left[  F_{i}P_{j},P_{k}Q_{l}\right]  =\delta_{j}%
^{l}LF_{i}P_{k}+\delta_{i}^{k}TQ_{l}P_{j}+\delta_{i}^{l}AP_{j}P_{k}$,}
&
$\left[  G_{i}F_{j},E\right]  =P_{i}F_{j}-G_{i}Q_{j},$\\
\multicolumn{2}{l}{$\left[  F_{i}P_{j},Q_{k}F_{l}\right]  =\delta_{k}%
^{j}LF_{i}F_{l}-\delta_{l}^{j}TF_{i}Q_{k}+\delta_{k}^{i}AF_{l}P_{j}$,}
&
$\left[  P_{i}Q_{j},G_{k}\right]  =-\delta_{k}^{j}P_{i}T-\delta_{k}^{i}%
Q_{j}M,$\\
\multicolumn{2}{l}{$\left[  F_{i}P_{j},Q_{k}F_{l}\right]  =\delta_{k}%
^{j}LF_{i}F_{l}-\delta_{l}^{j}TF_{i}Q_{k}+\delta_{i}^{k}AF_{l}P_{j}$,}
&
$\left[  P_{i}Q_{j},F_{k}\right]  =-\delta_{j}^{k}P_{i}A-\delta_{i}^{k}%
Q_{j}T,$\\
\multicolumn{2}{l}{$\left[  F_{i}P_{j},P_{k}G_{l}\right]  =\delta_{k}%
^{i}TP_{j}G_{l}-\delta_{l}^{j}MP_{k}F_{i}-\delta_{l}^{i}RP_{k}P_{j}$,}
&
$\left[  P_{i}Q_{j},Q_{k}\right]  =\delta_{i}^{k}Q_{j}L,$\\
\multicolumn{2}{l}{$\left[  G_{i}F_{j},G_{k}F_{l}\right]  =\delta_{l}%
^{i}RG_{k}F_{j}-\delta_{j}^{k}RG_{i}F_{l}$,} & $\left[  P_{i}Q_{j}%
,P_{k}\right]  =-\delta_{j}^{k}P_{i}L,$\\
\multicolumn{2}{l}{$\left[  G_{i}F_{j},P_{k}Q_{l}\right]  =\delta_{k}%
^{j}TG_{i}Q_{l}+\delta_{j}^{l}AG_{i}P_{k}+\delta_{k}^{i}MQ_{l}F_{j}+\delta
_{i}^{l}TP_{k}F_{j}$,} & $\left[  Q_{i}F_{j},G_{k}\right]  =-\delta_{j}%
^{k}Q_{i}R-\delta_{k}^{i}F_{j}T,$\\
\multicolumn{2}{l}{$\left[  G_{i}F_{j},Q_{k}F_{l}\right]  =\delta_{k}%
^{j}AG_{i}F_{l}+\delta_{k}^{i}TF_{j}F_{l}+\delta_{i}^{l}RQ_{k}F_{j}$,}
&
$\left[  Q_{i}F_{j},F_{k}\right]  =-\delta_{k}^{i}F_{j}A,$\\
\multicolumn{2}{l}{$\left[  G_{i}F_{j},P_{k}G_{l}\right]  =\delta_{k}%
^{j}TG_{i}G_{l}-\delta_{l}^{j}RG_{i}P_{k}+\delta_{k}^{i}MF_{j}G_{l}+\delta
_{k}^{i}\delta_{j}^{l}MR$,} & $\left[  Q_{i}F_{j},Q_{k}\right]
=\delta
_{k}^{j}Q_{i}A,$\\
\multicolumn{2}{l}{$\left[  P_{i}Q_{j},P_{k}Q_{l}\right]  =\delta_{l}%
^{i}LP_{k}Q_{j}-\delta_{k}^{j}LP_{i}Q_{l}$,} & $\left[  Q_{i}F_{j}%
,P_{k}\right]  =\delta_{j}^{k}Q_{i}T-\delta_{k}^{i}F_{j}L,$\\
\multicolumn{2}{l}{$\left[  P_{i}Q_{j},Q_{k}F_{l}\right]  =\delta_{k}%
^{i}LF_{l}Q_{j}-\delta_{l}^{j}AP_{i}Q_{k}-\delta_{l}^{i}TQ_{k}Q_{j}$,}
&
$\left[  Q_{i}F_{j},E\right]  =-Q_{i}Q_{j},$\\
\multicolumn{2}{l}{$\left[  P_{i}Q_{j},P_{k}G_{l}\right]  =-\delta_{k}%
^{j}LP_{i}G_{l}-\delta_{l}^{j}TP_{i}P_{k}-\delta_{i}^{l}MP_{k}Q_{j}$,}
&
$\left[  P_{i}G_{j},G_{k}\right]  =-\delta_{k}^{i}G_{j}M,$\\
\multicolumn{2}{l}{$\left[  Q_{i}F_{j},Q_{k}F_{l}\right]  =\delta_{k}%
^{j}AQ_{i}F_{l}-\delta_{l}^{i}AQ_{k}F_{v}$,} & $\left[  P_{i}G_{j}%
,F_{k}\right]  =\delta_{k}^{i}P_{i}R-\delta_{k}^{i}G_{j}T,$\\
\multicolumn{2}{l}{$\left[  Q_{i}F_{j},P_{k}Q_{l}\right]  =\delta_{k}%
^{j}TQ_{i}G_{l}-\delta_{i}^{k}LF_{j}G_{l}-\delta_{l}^{j}RP_{k}Q_{i}-\delta
_{i}^{l}TP_{k}F_{j}$,} & $\left[  P_{i}G_{j},Q_{k}\right]  =\delta_{k}%
^{j}P_{i}T+\delta_{i}^{k}G_{j}L,$\\
\multicolumn{2}{l}{$\left[  P_{i}G_{j},P_{k}G_{l}\right]  =\delta_{k}%
^{j}MP_{i}G_{l}-\delta_{i}^{l}MP_{k}G_{j}$,} & $\left[  P_{i}G_{j}%
,P_{k}\right]  =\delta_{j}^{k}P_{i}M,$\\
\multicolumn{2}{l}{$\left[  P_{i}G_{j},E\right]  =P_{i}P_{j},\quad
\left[  T^{2}+RL,E\right]  =0,$}
& $\left[  TF_{i}P_{j},E\right]  =TQ_{i}P_{j}+LQ_{i}G_{j},$\\
\multicolumn{2}{l}{$\left[  TG_{i}Q_{j},E\right]
=TP_{i}Q_{j}+LG_{i}Q_{j},\quad \left[ RP_{i}Q_{j},E\right]
=-2TP_{i}Q_{j}$.} & \\
\end{tabular*}
\end{equation*}

\section*{Appendix B}

The Lie algebra $\frak{g}$ spanned by the operators $\left\{  X_{1,1}%
,X_{-1,1},X_{1,-1},F_{1},G_{1},Q_{1},P_{1},R,E,T\right\}  $ has
Levi decomposition
\begin{center}
$\frak{g}=\frak{su}\left(  1,1\right)  \overrightarrow{\oplus}_{R}%
\frak{r}$,
\end{center}
where $R=D_{\frac{1}{2}}\oplus D_{\frac{1}{2}}\oplus D_{0}^{3}$,
$D_{\frac{1}{2}}$ being the two dimensional irreducible spin $\frac{1}{2}%
$-representation and $D_{0}$ the trivial representation. The
commutators of $\frak{g}$ and the contraction $\frak{g}^{\prime}$
are simultaneously given in the following table, for the values of
$\alpha=1$ and $\alpha=0$, respectively.

\smallskip

\begin{center}
$\begin{array}
[c]{c|cccccccccc}%
\left[\;,\;\right]& X_{1,1} & X_{-1,1} & X_{1,-1} & G_{1} & F_{1}
& Q_{1} & P_{1} & R & E & T\\\hline X_{1,1} & 0 & -2X_{-1,1} &
2X_{1,-1} & -G_{1} & F_{1} & -Q_{1} & P_{1} & 0 &
0 & 0\\
X_{-1,1} &  & 0 & 4X_{1,1} & 0 & 2G_{1} & 0 & 2Q_{1} & 0 & 0 & 0\\
X_{1,-1} &  &  & 0 & -2F_{1} & 0 & -2P_{1} & 0 & 0 & 0 & 0\\
G_{1} &  &  &  & 0 & R & 0 & T & 0 & Q_{1} & 0\\
F_{1} &  &  &  &  & 0 & -T & 0 & 0 & P_{1} & 0\\
Q_{1} &  &  &  &  &  & 0 & \alpha R & 0 & \alpha G_{1} & 0\\
P_{1} &  &  &  &  &  &  & 0 & 0 & \alpha F_{1} & 0\\
R &  &  &  &  &  &  &  & 0 & 2T & 0\\
E &  &  &  &  &  &  &  &  & 0 & -2\alpha R\\
T &  &  &  &  &  &  &  &  &  & 0
\end{array}$
\end{center}

\end{document}